\newcommand{\li}{n} 
\newcommand{\lo}{m}	
\newcommand{\invt}{\mathbf{X}}            
\newcommand{\ouvt}{\mathbf{Y}}          
\newcommand{\wuvt}{\mathbf{W}}
\newcommand{\delep}{\mathit{d}} 
\newcommand{\ivt}{\mathbbm{X}}   
\newcommand{\ovt}{\mathbbm{Y}}	
\newcommand{\wvt}{\mathbbm{W}}	
\newcommand{\pvt}{P_{\invt^{\li}}}
\newcommand{\hpvt}{P_{\widehat{\invt}^{\li}}} 
\newcommand{\qvt}{Q_{\invt}^{\li}}
\newcommand{\pnvt}{p}  
\newcommand{\bpvt}{P_{\overline{\invt}^{\li}}}
\newcommand{\mif}{I}
\newcommand{\infd}{i}
\newcommand{\real}{\mathbbm{R}}
\newcommand{\psp}{\mathbbm{P}}
\newtheorem{definition}{Definition}
\theoremstyle{theorem}
\newtheorem{theorem}{Theorem}
\newtheorem{hypothesis}{Hypothesis}
\newtheorem{lemma}{Lemma}
\newtheorem{remark}{Remark}
\newcommand{\xRightarrow}[2][]{\ext@arrow 0359\Rightarrowfill@{#1}{#2}}
\newcommand{\xLeftrightarrow}[2][]{\ext@arrow 0359\Leftrightarrowfill@{#1}{#2}}
\begin{document}
\newcommand{\argmin}{\operatornamewithlimits{argmin}}
\newcommand{\argmax}{\operatornamewithlimits{argmax}}

\title{Maximum Likelihood Upper Bounds on the Capacities of Discrete Information Stable Channels}
\author{Tongxin~Li,~\IEEEmembership{Student Member,~IEEE
	} \thanks{Li is with the Computing + Mathematical Sciences Department, California Institute
		of Technology, Pasadena, CA 91125 USA (e-mail: tongxin@caltech.edu).}}
\maketitle

\begin{abstract}
Motivated by a greedy approach for generating {\it{information stable}} processes, we prove a universal maximum likelihood (ML) upper bound on the capacities of discrete information stable channels, including the binary erasure channel (BEC), the binary symmetric channel (BSC) and the binary deletion channel (BDC). The bound is derived leveraging a system of equations obtained via the Karush-Kuhn-Tucker conditions. Intriguingly, for some memoryless channels, \emph{e.g., }the BEC and BSC, the resulting upper bounds are {\it{tight}} and equal to their capacities. For the BDC, the universal upper bound is related to a function counting the number of possible ways that a length-$\lo$ binary subsequence can be obtained by deleting $\li-\lo$ bits (with $\li-\lo$ close to $\li \delep$ and $\delep$ denotes the {\it{deletion probability}}) of a length-$\li$ binary sequence. To get explicit upper bounds from the universal upper bound, it requires to compute a maximization of the matching functions over a Hamming cube containing all length-$\li$ binary sequences. Calculating the maximization exactly is hard. Instead, we provide a combinatorial formula approximating it. Under certain assumptions, several approximations and an {\it{explicit}} upper bound for deletion probability $\delep\geq 1/2$ are derived.

\end{abstract}
\begin{IEEEkeywords}
Information Stable Channels; Channel Capacity
\end{IEEEkeywords}



\section{Introduction}
\label{sec:0}

\IEEEPARstart{T}{he} \emph{information stable channels} were introduced by Dubrushin in ~\cite{dobrushin1963general}. Under the \emph{information stability} condition, sufficiently, their capacities can be expressed as
\begin{align}
\label{eq:1.0}
C=\liminf_{\li\rightarrow\infty}\frac{1}{\li}\ \sup_{\invt} {\ \mif\left( \invt ; \ouvt(\invt)\right)}.
\end{align}
Essentially, a channel satisfies information stability is equivalent to having the capacity expression above~\cite{hu1964shannon}. Preceding works have considered a variety of more general frameworks, \emph{e.g.},  a formula for channel capacity~\cite{verdu1994general} based on the information-spectrum method; a general capacity expression for channels with feedback~\cite{tatikonda2009capacity}; general capacity formulas for classical-quantum channels~\cite{hayashi2003general}, to list just a few.

Despite the simplicity of the formula in~(\ref{eq:1.0}), for some channels with memory, explicitly computing the capacities directly using the general formula is often not trivial. A famous example is the binary deletion channel (BDC), which was introduced by Levenshtein in~\cite{levenshtein1966binary} more than fifty years ago to model synchronization errors. In his model, a transmitter sends an infinite stream of bits representing messages over a communication channel. Before reaching at a receiver, the bits are deleted independently and identically with some {\emph{deletion probability}} $\delep\in (0,1)$.  The receiver wishes to recover the original message based on the deleted bits, with an asymptotically zero (in the length of the stream) probability of error. The BDC satisfies the information stability~\cite{dobrushin1967shannon}. Thus, the channel capacity denoted by $C(d)$ can be expressed via the formula in (\ref{eq:1.0}). However, a precise characterization of $C(\delep)$ is still unknown; it does not seem even possible to accurately compute the capacity numerically replying on the existing methods, for instance, the Blahut-Arimoto Algorithm (BAA)~\cite{arimoto1972algorithm,blahut1972computation,fertonani2010novel}.

In this work, we consider discrete channels with finite alphabets and derive a general upper bound (called the maximum likelihood (ML) upper bound in Section~\ref{sec:3}) on the capacities of information stables channels by analyzing a system of equations derived from the general formula in~(\ref{eq:1.0}). We demonstrate that for channels without memory, \emph{e.g.}, the binary erasure channel (BEC) and the binary symmetric channel (BSC). The corresponding upper bounds are tight for the BEC and BSC and equal to their channel capacities. For channels with memory, as a case study, we apply the ML upper bound to derive (\textit{implicit} and \textit{explicit}) approximations for $C(d)$, under certain assumptions. 

\subsection{Background}
\label{sec:1.1}

A discrete channel with a finite alphabet can be regarded as a stochastic matrix from an input space of all infinite-length sequences to an output space containing all sequences that can be obtained via the channel law. Formally, we follow the approach in \cite{vembu1995source,han1993approximation} and define the transmitted and received bit-streams via infinite processes.
For each fixed \textit{block-length} $\li$, there is a sequence of elements $\invt^{\left(\li\right)}_1 \ldots \invt^{\left(\li\right)}_{\li}$ selected from a finite set $ \mathcal{X}$, and there is a probability distribution $\pvt$ over this sequence. Let $\ivt$ denote an input process in terms of finite-dimensional sequences such that
$\ivt:= \{\invt^{\li}=(\invt_1^{(\li)},\ldots,\invt_\li^{(\li)})\}_{\li\geq 1}$. Similarly, denote by  $\ovt:=\{\ouvt^{\lo}=(\ouvt_1^{(\lo)},\ldots,\ouvt_\lo^{(\lo)})\}_{\lo\geq 1}$ with each $\ouvt_i^{(\lo)}$ in a finite set $\mathcal{Y}$ the corresponding output process of finite-dimensional sequences induced by $\ivt$ via the channel law $\wvt:=\{\wuvt^{\li}(\cdot|\cdot):\mathcal{X}^\li\rightarrow \mathcal{Y}^{\lo}\}_{\li,\lo\geq 1}$. So that
\begin{align*}
\mathbbm{P}\left(\mathbf{Y}^\lo = \mathbf{y}^\lo | \mathbf{X}^\li = \mathbf{x}^\li\right) := \wuvt^{\li}\left(\mathbf{y}^\lo|\mathbf{x}^\li\right).
\end{align*}

Note that the block-length of received codewords $\lo$ is not necessarily equal to $\li$, the block-length of the transmitted codeword. Moreover, the output block-length $\lo$ is allowed to be \emph{flexible}, meaning that it can be regarded as a random variable with distribution specified by the channel law~\footnote{Flexible output length allows us to apply this general framework to the BDC later in Section~\ref{sec:4}.}. In the remaining part of this paper, we often omit the superscript $\lo$ in $ \mathbf{y}^\lo$ and $ \mathbf{Y}^\lo$, to avoid confusion. This indicates the length of the output sequence $\mathbf{y}$ is not fixed.





\subsection{Outline of the Paper}
The remaining content of the paper is organized as follows. In Section~\ref{sec:1} we give a simplified version of $C(\delep)$, which is derived from the capacity formula in~(\ref{eq:cnw}) for information stable channels. Based on it, we prove a general upper bound (the ML upper bound in Theorem~\ref{thm:0}) on information stable channels in Section~\ref{sec:3}. Section~\ref{sec:3.3} follows by verifying the tightness of the ML upper bound for the BEC and the BSC. Next, in Section~\ref{sec:4}, several approximations for the capacity of the BDC are reported.



\section{Preliminaries}
\label{sec:1}

\subsection{Notational Convention}
We use $\log(\cdot)$ to denote logarithms over base $2$, unless stated otherwise. Let $\mathcal{X}^n$ and $\overline{\mathcal{Y}}$ denote the set of all possible length-$\li$ sequences and the set of all induced output sequences (having flexible lengths). Let $N:=\left|\mathcal{X}^n\right|$ and $M:=\left|\overline{\mathcal{Y}}\right|$.
We use the lowercase letter $j$ to index the $j$-th length-$\li$ input sequence $\mathbf{x}^{\li}_j$, and the letter $j$ to index the length-$\lo$ output sequence $\mathbf{y}^{\lo}_i$ with $j=1,\ldots,N$ and $i=1,\ldots,M$ respectively.  To distinguish between random variables and their realizations, we denote the former by capital letters and the latter by lower case letters, at most of the places throughout this work\footnote{Except for the output length $\lo$, which is a random variable dependent on the channel law.}. 

\subsection{Capacity Proxies}
For a fixed dimension $\li$, we maximize the mutual information between $\invt^\li\in\mathcal{X}^\li$ and $\ouvt\in\overline{\mathcal{Y}}$ in a way similar to defining the ``information capacity" for discrete memoryless channels (DMCs) over the binary alphabet, to obtain the quantity:
\begin{align}
\label{eq:cnw}
C_\li(\wuvt^\li):= \frac{1}{\li}\ \sup_{\invt^{\li}} {\ \mif\left( \invt^{\li} ; \ouvt\left( \invt^{\li}\right)\right)}
\end{align}
where the supremum is taken over all $\invt^\li\in\mathcal{X}^\li$ with distributions in the set
\begin{align}
\label{eq:1.1}
\psp^{N}
:=&\Big\{\pvt\in\real^{N}:\pnvt_j\geq 0 \ \forall \ j=1,\dots,N ; \  \sum_{j=1}^{N} \pnvt_j=1\Big\}.
\end{align}
 
 \subsection{Information-stability}

It turns out the quantity $C_\li(\wuvt^\li)$ is asymptotically (in $\li$) the same as the operational capacity under the following condition on channels, which is called \emph{information stability}\footnote{The way of classifying the channels that have an operational meaning with the capacity expressions in ($1.1$) using a condition called \emph{information stability} was first introduced by Dobrushin and Guoding Hu~\cite{dobrushin1963general,hu1964shannon}. It was restated and studied in many equivalent forms. For instance, in~\cite{vembu1995source}, information stability was proved to be insufficient to classify whether a source-channel separation holds or not. In~\cite{chen1999optimistic}, the expressions for optimistic channel capacity and optimal source coding rate are given for the class of information stable channels and similarly ``information stable" sources respectively.}.

\begin{definition}[Information Stability for Channels~\cite{dobrushin1963general,hu1964shannon,verdu1994general}]
\label{def:1}
A channel $\wvt$ is said to be {\em information stable}, if there exists an input process $\ivt$ such that $C_\li(\wuvt^\li)<\infty$ for all sufficiently large $\li$ and
\begin{align*}
\limsup_{\li\rightarrow \infty}\ \mathbbm{P}\left\{ \left|  \frac{\infd_{\invt^\li, \wuvt^\li}{\left(\invt^\li;\ouvt\left(\invt^\li\right)\right)}}{\li C_\li\left(\wuvt^\li\right)}-1\right  |>\gamma \right\}=0&\\
 \qquad\forall \ \gamma>0&
\end{align*}
where $\infd_{\invt^\li ,\wuvt^\li}{\left({\mathbf{x}}^{\li} ;{\mathbf{y}}\right)}:=\log\frac{{\wuvt^\li}({\mathbf{y}}|{\mathbf{x}}^{\li})}{{P}_{\ouvt}\left({\mathbf{y}}\right)}$ denotes the \emph{information density} for all ${\mathbf{x}}^{\li} \in\mathcal{X}^\li$ and ${\mathbf{y}} \in\overline{\mathcal{Y}}$. In other words, the normalized information density $\frac{1}{\li}\infd_{\invt^\li ,\wuvt^\li}{\left({\mathbf{x}}^{\li} ;{\mathbf{y}}\right)}$ converges \emph{in probability} to $C_\li(\wuvt^\li)$.
\end{definition}

Intuitively, information stability characterizes the types of channels in a manner similar to that the asymptotic equipartition property (AEP) characterizes stochastic sources. In fact, information stability for a channel $\wvt$ implies the existence of a class of corresponding input processes $\ivt$ such that $\ivt$, on being input to $\wvt$, results in a near-optimal code. For the case of discrete memoryless channels (DMCs), the operational meaning of the single-letter quantity in Eq. (\ref{eq:lim}) below appears as a natural consequence of the law of large numbers. For general channels, by considering the asymptotic behavior of the information density taking on an optimal input process $\ivt$ (which maximizes the mutual information), information stability provides (with sufficient generality, for a broader class of channels) an analogue of the law of large numbers. Such an optimal input process $\ivt$ may be understood to be equivalent to a sequence of codes that are capacity-achieving asymptotically in the block-length $\li$. The key idea relies on classical achievability bounds (for instance, Feinstein's lemma~\cite{feinstein1954new}, Shannon's achievability bound~\cite{shannon1957certain}).

Dobrushin in \cite{dobrushin1967shannon} proved that BDCs are information stable as defined in Definition~\ref{def:1}. For an arbitrary fixed $\li$, maximizing ${\mif\left(\invt^{\li} ; \ouvt\left(\invt^{\li}\right)\right)}$ in Eq.~(\ref{eq:cnw}) gives an optimal input distribution $\pvt^*$. Through appropriate achievability results (~\cite{feinstein1954new,shannon1957certain}), it is possible to construct an $(\li,{M},\lambda)$-code whose error probability vanishes as $\li$ goes to infinity. In addition, the rate ${\log {M}}/{\li}$ approaches $C_\li(\wuvt^\li)<\infty$  for sufficiently large $\li$. Hence for information stable channels, the capacities exist and can be written as\footnote{Note that this limiting expression does not always hold for general channels. For instance, consider one example in~\cite{verdu1994general}: a binary channel with output codewords equal to the input codeword with probability ${1}/{2}$ and changed independently of the input codewords with probability ${1}/{2}$. The capacity of this channel is $0$ since the error probability is always strictly positive and hence not vanishes. However, the formula in (\ref{eq:lim}) gives ${1}/{2}$.}
\begin{align}
\label{eq:lim}
C &=\liminf_{\li\rightarrow\infty}{C_\li(\wuvt^\li)<\infty}. 
\end{align}




\subsection{System of Equations for Optimality}

Recall $N:=\left|\mathcal{X}^\li\right|$ and $M:=\left|\mathcal{Y}\right|^\lo$. 

Our approach focuses on bounding $C_\li(\delep)$. Expressing the mutual information in terms of the channel law, the capacity-proxy $C_\li(\wuvt^\li)$ defined in~(\ref{eq:cnw}) equals to
\begin{align}
\nonumber
&C_\li(\wuvt^\li)\\
\label{eq:2.1}
=&\frac{1}{\li}{\sup_{\pvt}}{\sum_{\mathbf{x},\mathbf{y}}{{\pnvt\left(\mathbf{x}\right)\wuvt^{\li}\left(\mathbf{y}|\mathbf{x}\right)\log{\frac{\wuvt^{\li}\left(\mathbf{y}|\mathbf{x}\right)}{\sum_{\mathbf{x}}{\pnvt\left(\mathbf{x}\right)\wuvt^{\li}\left(\mathbf{y}|\mathbf{x}\right)}}}}}}.
\end{align}

Here, the supremum is taken over all distributions $\{\pnvt\left(\mathbf{x}\right)\}_{\mathbf{x}\in\mathcal{X}^{\li}}$ in the set $\psp^{N}$ (defined in~(\ref{eq:1.1})) and the summation is taken over all length-$\li$ input sequences $\mathbf{x}_j\in\mathcal{X}^{\li}$ and all output sequences  $\mathbf{y}_i\in\cup_{m}\mathcal{Y}^{\lo}$. 


From an optimization perspective, the asymptotic behavior of $C_\li(\wuvt^\li)$ can be captured by establishing a sequence of capacity-achieving distributions $\{\pvt^*\}_{\li}$ maximizing the following quantity for each $\li\geq 1$:
\begin{align}
\label{eq:2.21}
\sum_{j=1}^{N}\sum_{i=1}^{M}{{\pnvt_j\wuvt^{\li}\left(\mathbf{y}_i|\mathbf{x}_j\right)\log{\frac{\wuvt^{\li}\left(\mathbf{y}_i|\mathbf{x}_j\right)}{\sum_{j=1}^{N}{\pnvt_j\wuvt^{\li}\left(\mathbf{y}_i|\mathbf{x}_j\right)}}}}}.
\end{align}

Derived from the Karush-Kuhn-Tucker conditions, the following lemma generalizes Theorem 4.5.1. in~\cite{gallager1968information} (\emph{cf}.~\cite{jelinek1968probabilistic}), which was established to find channel capacities of DMCs with non-binary input/output alphabets. The lemma states a necessary and cufficient condition of the existence of $\{\pvt^*\}_{\li}$ maximizing (\ref{eq:2.21}) and it can be proved along the same line as in~\cite{gallager1968information}. The only difference is that for general channels, the summation is taken over all sequences in $\mathcal{Y}^\lo$ (this is in general exponential in $\lo$). While for DMCs, the summation can be decomposed and taken over the alphabet set of each individual coordinate of the sequence, thus the number of summations is linear in $\lo$.  For brevity the proof is omitted.  


\begin{lemma}[\cite{blahut1972computation,arimoto1972algorithm,gallager1968information,jelinek1968probabilistic}]
\label{thm:eq}
Fix a block-length $\li\geq 1$. There exists an optimal probability vector $\pvt^{*}=(\pnvt_1^*,\pnvt_2^*,\ldots,\pnvt_{N}^*)$ such that the quantity in (\ref{eq:2.21}) is maximized 
if and only if there exists $\lambda_{\li}\geq 0$ and for all $j=1,\ldots,N$,
\begin{align}
\nonumber
\frac{1}{\li}\sum_{i=1}^{M} \wuvt^{\li}\left(\mathbf{y}_i|\mathbf{x}_j\right)&\log{\frac{\wuvt^{\li}\left(\mathbf{y}_i|\mathbf{x}_j\right)}{\sum_{j=1}^{N}{\pnvt^*_j\wuvt^{\li}\left(\mathbf{y}_i|\mathbf{x}_j\right)}}}
\\
\label{eq:2.2}
&\begin{cases} =\lambda_{\li}&\mbox{   if } \pnvt_j^*\neq 0\\ \leq \lambda_{\li} &\mbox{      if } \pnvt_j^*=0\end{cases}.
\end{align}

Moreover the capacity
$C =\lim_{\li\rightarrow\infty}\lambda_{\li}$ if the limit exists.
\end{lemma}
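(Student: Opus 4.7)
The plan is to treat this as a standard KKT application to a concave maximization on the simplex. The mutual information $\mif(\invt^\li;\ouvt)$ in~(\ref{eq:2.21}) is a concave function of the input distribution $\pvt$ when the channel $\wuvt^\li$ is held fixed (classical fact), and the feasible set $\psp^{N}$ is a compact convex simplex; hence a maximizer $\pvt^{*}$ exists and the KKT conditions are both necessary and sufficient for optimality. The whole argument therefore reduces to identifying the stationary equations with the divergence dichotomy stated in~(\ref{eq:2.2}).

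First I would compute the gradient of the objective at a generic $\pvt$. Writing $P_{\ouvt}(\mathbf{y}_i) := \sum_{j'=1}^{N} \pnvt_{j'} \wuvt^\li(\mathbf{y}_i|\mathbf{x}_{j'})$ and differentiating~(\ref{eq:2.21}) term by term, the $\ent(\ouvt)$ piece contributes $-\sum_i \wuvt^\li(\mathbf{y}_i|\mathbf{x}_j)\log P_{\ouvt}(\mathbf{y}_i) - 1/\ln 2$ while the $-\ent(\ouvt\mid\invt^\li)$ piece contributes $\sum_i \wuvt^\li(\mathbf{y}_i|\mathbf{x}_j)\log \wuvt^\li(\mathbf{y}_i|\mathbf{x}_j)$, so that
\begin{align*}
\frac{\partial \mif(\invt^\li;\ouvt)}{\partial \pnvt_j} = \sum_{i=1}^{M} \wuvt^\li(\mathbf{y}_i|\mathbf{x}_j)\log\frac{\wuvt^\li(\mathbf{y}_i|\mathbf{x}_j)}{P_{\ouvt}(\mathbf{y}_i)} - \frac{1}{\ln 2}.
\end{align*}
The first summand is precisely the divergence $D(\wuvt^\li(\cdot|\mathbf{x}_j)\,\|\,P_{\ouvt})$ appearing (up to the factor $1/\li$) in~(\ref{eq:2.2}).

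Next I would form the Lagrangian $L = \mif(\invt^\li;\ouvt) - \mu\bigl(\sum_j \pnvt_j - 1\bigr) + \sum_j \nu_j \pnvt_j$ with $\nu_j \geq 0$, write stationarity $\partial L/\partial \pnvt_j = 0$ as $D(\wuvt^\li(\cdot|\mathbf{x}_j)\,\|\,P_{\ouvt}) = \mu + 1/\ln 2 - \nu_j$, and invoke complementary slackness $\nu_j \pnvt_j^{*} = 0$: this forces $\nu_j = 0$ (equality) when $\pnvt_j^{*} > 0$ and only leaves $\nu_j \geq 0$ (upper-bound inequality) when $\pnvt_j^{*} = 0$. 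Setting $\lambda_{\li} := (\mu + 1/\ln 2)/\li$ and dividing by $\li$ yields exactly~(\ref{eq:2.2}). To identify $\lambda_{\li}$ and verify $\lambda_{\li} \geq 0$, I would multiply the equality case by $\pnvt_j^{*}$ and sum over $\{j:\pnvt_j^{*}>0\}$, giving $\lambda_{\li} = \frac{1}{\li}\sum_j \pnvt_j^{*} D(\wuvt^\li(\cdot|\mathbf{x}_j)\,\|\,P_{\ouvt}^{*}) = C_{\li}(\wuvt^{\li}) \geq 0$, whence the limit claim follows from~(\ref{eq:lim}) whenever $\lim_{\li} \lambda_{\li}$ exists.

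Structurally the argument is entirely standard KKT; the only subtlety compared with the DMC treatment in~\cite{gallager1968information} is that the output index $i$ here ranges over sequences of random length in $\cup_{\lo} \mathcal{Y}^{\lo}$ rather than over a single-letter alphabet. This is harmless for the derivation itself, which operates at the joint-distribution level, but it is precisely why the resulting system~(\ref{eq:2.2}) cannot be collapsed to a tractable single-letter condition. I therefore expect the main obstacle to lie not in the proof itself but in the downstream evaluation of the per-sequence divergences when $M$ grows super-exponentially in $\li$, which is exactly what motivates the ML relaxation carried out in Section~\ref{sec:3}.
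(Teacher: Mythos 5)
Your KKT derivation is correct and is exactly the argument the paper intends: the paper omits the proof, stating only that it follows Theorem 4.5.1 of Gallager~\cite{gallager1968information} with the single-letter output alphabet replaced by the set of output sequences, which is precisely what you carry out (concavity of $\mif$ in $\pvt$, Lagrangian stationarity plus complementary slackness on the simplex, and the identification $\lambda_\li = C_\li(\wuvt^\li)$ by weighting the equality cases with $\pnvt_j^*$). No gaps; your closing remark about the intractability of the resulting system also matches the paper's own motivation for the ML relaxation.
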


Indeed,  (see~\cite{gallager1968information,jelinek1968probabilistic}) a probability distribution for an information stable source $\ivt$ satisfying~(\ref{eq:2.2}) always exists as $\li$ grows. Thus, the capacity-achieving distribution with fixed block-length $\li$ can be attained by solving the system~(\ref{eq:2.2}). Finding such an optimal $\pvt^*$ for the system of equations~(\ref{eq:2.2}) is equivalent to solving a non-linear system of equations that consists of exponentially (in $\li$) many variables. As introduced in Section~\ref{sec:0}, the BAA is one of the algorithms that can be applied to search for numerical solutions of~(\ref{eq:2.2}).


However, this approach has several limitations. On the one hand, in direct implementation of the BAA, as $\li$ grows, it becomes computationally intractable even to store the variables to be computed. One the other hand, as the BAA is itself an iterative algorithm attempting to solve the non-convex optimization problem~(\ref{eq:2.2}), and to the best of our knowledge for general channels, there are no guarantees on how quickly the numerical solution converges as a function of the number of iterations. Therefore, instead of looking for numerical answers, we concentrate on finding a universal upper bound on the capacities of general channels. This motivates the next section.



\section{Maximum Likelihood Upper Bound}
\label{sec:3}

In the sequel, we present some definitions. First, motivated by the notion of information stability defined in Definition~\ref{def:1}, we characterize a subset of the joint set $\mathcal{X}^\li \times\overline{\mathcal{Y}}$ consisting of all possible combinations of input and output sequences. This subset satisfies two vital properties. First, it behaves as a ``typical set'' and contains nearly all pairs of $\left(\mathbf{x}^\li,\mathbf{y}\right)$ randomly generated according to an arbitrary distributions $\pvt$ for every large $\li$. Second, conditioned on the pair $\left(\mathbf{X}^\li,\mathbf{Y}\right)$ belongs to the subset, the conditional mutual information does not differ too much from $C_\li\left(\wuvt^\li\right)$. Note that the concentration of information densities is stronger than that for information stable sources in two perspectives -- the concentration is in expectation; and it is required  to hold for every source $\ivt$.

\begin{definition}
	\label{def:6}
	For information stable channels with any source $\ivt$, a subset ${\mathcal{A}}$ of $\mathcal{X}^\li \times\overline{\mathcal{Y}}$ is called a \emph{concentration set} if it satisfies
	\begin{align}
	\label{eq:3.22}
	&\liminf_{\li\rightarrow\infty}\mathbbm{P}\left(\left(\mathbf{X}^\li,\mathbf{Y}\right)\in\mathcal{A}\right)=1,\\
	\label{eq:3.15}
	&\limsup_{\li\rightarrow\infty}\mathbbm{E}\left[\left|\frac{\infd_{\invt^\li, \wuvt^\li}{(\invt^\li;\ouvt\left(\invt^\li\right))}}{\li C_\li\left(\wuvt^\li\right)}-1\right|\Big | \left(\mathbf{X}^\li,\mathbf{Y}\right)\in\mathcal{A}\right]=0
	\end{align}
	where the randomness is over the source $\ivt$ and channel law $\wvt$.
\end{definition}

Later in Section~\ref{sec:3.3} and Section~\ref{sec:4.2.1}, we provide concrete and nontrivial examples of the concentration sets for the BEC, BSC and BDC respectively.

Since ${\mathcal{A}}\subseteq \mathcal{X}^\li \times\overline{\mathcal{Y}}$, the next lemma is straightforward.
\begin{lemma}
\label{lemma:2}
For each block-length $\li$, there exists a subset $\mathcal{B}$ of $\overline{\mathcal{Y}}$ such that
\begin{align*}
\mathcal{A}\subseteq \mathcal{X}^\li\times \mathcal{B}.
\end{align*}
\end{lemma}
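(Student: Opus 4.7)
The plan is to exhibit a valid $\mathcal{B}$ explicitly by projecting $\mathcal{A}$ onto its second coordinate. Concretely, I would define
\begin{equation*}
\mathcal{B} \;:=\; \bigl\{\mathbf{y}\in\overline{\mathcal{Y}} \;:\; \exists\, \mathbf{x}^\li\in\mathcal{X}^\li \text{ with } (\mathbf{x}^\li,\mathbf{y})\in\mathcal{A}\bigr\},
\end{equation*}
that is, the image of $\mathcal{A}$ under the canonical projection $\pi_2:\mathcal{X}^\li\times\overline{\mathcal{Y}}\to\overline{\mathcal{Y}}$. Since $\mathcal{A}\subseteq\mathcal{X}^\li\times\overline{\mathcal{Y}}$, this set $\mathcal{B}$ is well defined and lies in $\overline{\mathcal{Y}}$.

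The verification step is then immediate: for any $(\mathbf{x}^\li,\mathbf{y})\in\mathcal{A}$, the second coordinate $\mathbf{y}$ belongs to $\mathcal{B}$ by construction, and the first coordinate $\mathbf{x}^\li$ trivially belongs to $\mathcal{X}^\li$, so $(\mathbf{x}^\li,\mathbf{y})\in\mathcal{X}^\li\times\mathcal{B}$. This gives the desired containment $\mathcal{A}\subseteq\mathcal{X}^\li\times\mathcal{B}$. There is no real obstacle here; the lemma is essentially a bookkeeping statement that lets us later replace a joint ``typical set'' by a product set whose second factor is a distinguished subset of $\overline{\mathcal{Y}}$, presumably so that sums/maxima in the ML upper bound can be restricted to $\mathbf{y}\in\mathcal{B}$ rather than to all of $\overline{\mathcal{Y}}$. (If a smaller or more structured $\mathcal{B}$ were needed downstream, one could alternatively take $\mathcal{B}=\overline{\mathcal{Y}}$ trivially, but the projection above is the tightest choice and is what I would record.)
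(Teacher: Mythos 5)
Your proof is correct; the paper treats this lemma as immediate (it offers no proof beyond noting that $\mathcal{A}\subseteq\mathcal{X}^\li\times\overline{\mathcal{Y}}$ makes it straightforward), and taking $\mathcal{B}$ to be the projection of $\mathcal{A}$ onto its second coordinate is exactly the natural witness. Nothing further is needed.
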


It is useful to introduce the following ``constant'' version of the stochastic matrix $\wuvt$, called the {{\it stochastic factors}} for convenience. Again, we will carefully construct them in Section~\ref{sec:3.3} for both the BEC and the BSC, and in Section~\ref{sec:4.2} for the BDC.

\begin{definition}
\label{def:7}
We call a set of functions $f_k(\cdot|\cdot): \mathcal{Y}^{\lo}\times\mathcal{X}^{\li} \mapsto [0,1]$ {{\it stochastic factors}} if there exists a decomposition of $\mathcal{B}=\bigcup_{k\in\mathcal{K}}\mathcal{B}_k$ ($\mathcal{K}$ is a discrete set) such that
\begin{align}
\label{eq:3.11}
&\sum_{\mathbf{y}\in{\mathcal{B}_k}}f_k\left(\mathbf{y}|\mathbf{x}\right) = 1, \quad \forall \ \mathbf{x}\in\mathcal{X}^\li,  k\in\mathcal{K},\\
\label{eq:3.12}
&\sum_{k\in\mathcal{K}}\max_{\left(\mathbf{x},\mathbf{y}\right)\in\mathcal{A}_k}\frac{\wuvt^{\li}\left(\mathbf{y}|\mathbf{x}\right)}{f_k\left(\mathbf{y}|\mathbf{x}\right)}\leq 1
\end{align}
where $\mathcal{A}_k:=\mathcal{X}^\li\times\mathcal{B}_k$.
\end{definition}

Based on the concentration set and the stochastic factor defined above, we obtain the following upper bound on the capacity of an information stable channel:

\begin{theorem}[Maximum Likelihood Upper Bound\footnote{For intuition on why we call a maximum likelihood (ML) upper bound, see Section~\ref{sec:3.1.1}.}]
	\label{thm:0}
For a discrete information stable channel defined in Section~\ref{sec:1.1}, assume there exist a concentration set $\mathcal{A}$ and stochastic factors $f$ defined above. The following upper bound on the channel capacity  holds for any ${\mathcal{A}}=\bigcup_{k\in\mathcal{K}}\mathcal{X}^\li\times\mathcal{B}_k$ and $\{f_k\}_{k\in\mathcal{K}}$:
\begin{align}
\label{eq:2.6}
C\leq \liminf_{\li\rightarrow \infty}  \overline{C}_\li\left(\wuvt^\li\right)
\end{align}
where $\overline{C}_\li(\wuvt^\li)$ denotes the following quantity:
\begin{align}
\label{eq:3.6}
\overline{C}_\li(\wuvt^\li):={\frac{1}{\li}}\max_{k\in\mathcal{K}}\log\left( \sum_{\mathbf{y}\in {\mathcal{B}_k}}\max_{\mathbf{x}\in\mathcal{X}^{\li}}f_k\left(\mathbf{y}|\mathbf{x}\right)\right).
\end{align}
\end{theorem}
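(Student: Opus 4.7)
The plan is to sandwich the auxiliary quantity $S_\li := \sum_{\mathbf{y}\in\mathcal{B}}\max_{\mathbf{x}\in\mathcal{X}^\li}\wuvt^\li(\mathbf{y}|\mathbf{x})$ between an upper bound of the form $2^{\li\,\overline{C}_\li(\wuvt^\li)}$, coming directly from the stochastic-factor hypothesis, and a lower bound of the form $2^{\li\,C_\li(\wuvt^\li)(1-\epsilon)}(1-o(1))$, coming from the concentration-set hypothesis. Taking logarithms, dividing by $\li$, and passing to $\liminf_\li$ will then deliver the theorem.

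For the upper bound, I would decompose $\mathcal{B}=\bigcup_k\mathcal{B}_k$ as in Definition~\ref{def:7}, set $c_k := \max_{(\mathbf{x},\mathbf{y})\in\mathcal{A}_k}\wuvt^\li(\mathbf{y}|\mathbf{x})/f_k(\mathbf{y}|\mathbf{x})$, and use the pointwise bound $\wuvt^\li(\mathbf{y}|\mathbf{x})\leq c_k f_k(\mathbf{y}|\mathbf{x})$ which, because $\mathcal{A}_k=\mathcal{X}^\li\times\mathcal{B}_k$, is valid for \emph{every} $\mathbf{x}\in\mathcal{X}^\li$ whenever $\mathbf{y}\in\mathcal{B}_k$. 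Taking $\max_\mathbf{x}$ of both sides, summing over $\mathbf{y}\in\mathcal{B}_k$ and then over $k$, and invoking the normalization $\sum_k c_k\leq 1$ from~(\ref{eq:3.12}) collapses the whole expression to $S_\li \leq \sum_k c_k \sum_{\mathbf{y}\in\mathcal{B}_k}\max_\mathbf{x} f_k(\mathbf{y}|\mathbf{x}) \leq \max_k\sum_{\mathbf{y}\in\mathcal{B}_k}\max_\mathbf{x} f_k(\mathbf{y}|\mathbf{x}) = 2^{\li\,\overline{C}_\li(\wuvt^\li)}$.

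The lower bound is where the real work lies. For each $\li$ I would pick a capacity-achieving input $\pvt^*$ for~(\ref{eq:2.21}) (which exists by compactness of $\psp^{N}$) and denote the induced output by $P_{\ouvt}(\mathbf{y})=\sum_{\mathbf{x}}\pnvt^*(\mathbf{x})\wuvt^\li(\mathbf{y}|\mathbf{x})$. Applying Definition~\ref{def:6} to the source $\pvt^*$, Markov's inequality upgrades the $L^1$-concentration of $\infd_{\invt^\li,\wuvt^\li}/(\li C_\li(\wuvt^\li))$ around $1$ on $\mathcal{A}$ to conditional convergence in probability, and combined with $\mathbbm{P}((\invt^\li,\ouvt)\in\mathcal{A})\to 1$ this produces, for every fixed $\epsilon>0$, a set $\mathcal{G}_\li\subseteq\mathcal{A}$ of unconditional probability $1-o(1)$ on which $\wuvt^\li(\mathbf{y}|\mathbf{x})\geq 2^{\li C_\li(\wuvt^\li)(1-\epsilon)}\,P_{\ouvt}(\mathbf{y})$. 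By Lemma~\ref{lemma:2} the $\mathbf{y}$-projection $\mathcal{Y}^\star_\li$ of $\mathcal{G}_\li$ is contained in $\mathcal{B}$ and carries $P_{\ouvt}$-mass $1-o(1)$; selecting for each $\mathbf{y}\in\mathcal{Y}^\star_\li$ a witness $\mathbf{x}(\mathbf{y})$ with $(\mathbf{x}(\mathbf{y}),\mathbf{y})\in\mathcal{G}_\li$ gives $\max_{\mathbf{x}}\wuvt^\li(\mathbf{y}|\mathbf{x}) \geq 2^{\li C_\li(\wuvt^\li)(1-\epsilon)}P_{\ouvt}(\mathbf{y})$, and summing over $\mathbf{y}\in\mathcal{Y}^\star_\li\subseteq\mathcal{B}$ yields $S_\li \geq 2^{\li C_\li(\wuvt^\li)(1-\epsilon)}(1-o(1))$.

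Chaining the two bounds, taking logs and dividing by $\li$ gives $C_\li(\wuvt^\li)(1-\epsilon)\leq \overline{C}_\li(\wuvt^\li) + o(1)$; applying $\liminf_\li$, invoking~(\ref{eq:lim}) to identify $C$, and sending $\epsilon\to 0$ delivers $C \leq \liminf_\li \overline{C}_\li(\wuvt^\li)$. The step I expect to cost the most is the passage in the lower bound from the averaged $L^1$-concentration statement of Definition~\ref{def:6} to a pointwise multiplicative inequality on a high-probability subset \emph{whose $\mathbf{y}$-projection still sits inside $\mathcal{B}$}, since only then is one comparing against the same $S_\li$ whose unrestricted $\max_{\mathbf{x}\in\mathcal{X}^\li}$ was just dominated by $2^{\li\,\overline{C}_\li(\wuvt^\li)}$ in the upper-bound step.
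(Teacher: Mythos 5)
Your proof is correct, but it takes a genuinely different route from the paper's. The paper never forms your auxiliary quantity $S_\li=\sum_{\mathbf{y}\in\mathcal{B}}\max_{\mathbf{x}}\wuvt^{\li}(\mathbf{y}|\mathbf{x})$; instead it starts from the KKT characterization of Lemma~\ref{thm:eq}, so that $C_\li(\wuvt^\li)$ equals the left-hand side of (\ref{eq:2.2}) for every $j$ with $\pnvt_j^*\neq 0$, averages this identity against an auxiliary distribution $\qvt$, truncates the double sum to the concentration set $\mathcal{A}$ at an additive cost $\gamma_\li\to 0$, and then applies Jensen's inequality, using $\sum_{(\mathbf{x}_j,\mathbf{y})\in\mathcal{A}_k}q_jf_k(\mathbf{y}|\mathbf{x}_j)=1$ from (\ref{eq:3.11}) together with (\ref{eq:3.12}) to pull the logarithm outside, before setting $\qvt=\pvt^*$ so that the argument of the logarithm collapses to $\sum_{\mathbf{y}\in\mathcal{B}_k}\max_{\mathbf{x}}f_k(\mathbf{y}|\mathbf{x})$. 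Your sandwich replaces the KKT-plus-Jensen machinery with the same pointwise domination $\wuvt^\li\leq c_kf_k$ and normalization $\sum_kc_k\leq 1$ on the upper side, and on the lower side a Markov-inequality upgrade of the $L^1$-concentration (\ref{eq:3.15}) to an in-probability statement for a capacity-achieving $\pvt^*$ (for which $\mathbbm{E}[\infd]=\li C_\li(\wuvt^\li)$), so that a $(1-o(1))$-mass set of outputs each carries an $\mathbf{x}$-witness with $\wuvt^\li(\mathbf{y}|\mathbf{x})\geq 2^{\li C_\li(\wuvt^\li)(1-\epsilon)}P_{\ouvt}(\mathbf{y})$; the containment of the $\mathbf{y}$-projection in $\mathcal{B}$ via Lemma~\ref{lemma:2} is indeed the load-bearing step and you correctly flag it. What your route buys: it bypasses Lemma~\ref{thm:eq} entirely (whose proof the paper omits) and makes the ``maximum likelihood'' name transparent, since the bound is literally $\tfrac1\li\log$ of the total ML mass. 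What the paper's route buys: the error enters purely additively as $\gamma_\li$ with no auxiliary $\epsilon$, and only the conditional-expectation form of Definition~\ref{def:6} is ever needed, not a pointwise high-probability event. Two details to make explicit in a final write-up: you need $C_\li(\wuvt^\li)>0$ to convert $|\infd/(\li C_\li(\wuvt^\li))-1|\leq\epsilon$ into the one-sided lower bound (the degenerate case is trivial), and the limits must be taken in the order $\li\to\infty$ first, then $\epsilon\to 0$, which your closing paragraph already respects.
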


An intuitive derivation of the bound (\ref{eq:2.6}) is described below by formulating a simplified system in a greedy approach. The formal proof using Jensen's inequality is provided in Section~\ref{subsec:3.2}.

\subsection{Intuition}
\label{sec:3.1}
Recall that the system of equations in (\ref{eq:2.2}) gives, for every fixed dimension $\li$, an optimizing probability distribution $\pvt^*$ for the capacity proxy $C_\li(\wuvt^\li)$ in (\ref{eq:2.1}). Since actually solving the system of equations in (\ref{eq:2.2}) is computationally intractable for large $\li$, it is desirable to relax this system to a computationally tractable system that nonetheless provides a good outer bound to (\ref{eq:2.2}). Our starting point is the observation that an \emph{information stable} input process $\widehat{\ivt}=\{\widehat{\invt}^{\li}=(\widehat{\invt}_1^{(\li)},\ldots,\widehat{\invt}_\li^{(\li)})\}_{\li\geq 1}$ (with corresponding sequence of probability distributions $\{\hpvt=\widehat{p}_1,\ldots.\widehat{p}_{N}\}_{\li\geq 1}$) satisfies all but an asymptotically (in $\li$) vanishing fraction of the constraints in (\ref{eq:2.2}). To see this, one may notice that by the definition of information stability (in Definition~\ref{def:1}), for any fixed $\gamma>0$ it holds that there is a sufficiently large $N_\gamma$ such that for all $\li>N_\gamma$, in probability (over the input process $\ivt$ and the channel law $\wvt$ it holds that the ratio of the information density $\log\frac{{\wuvt^\li}({\mathbf{y}}|{\mathbf{x}}^{\li})}{{P}_{\ouvt}({\mathbf{y}})}$ over $\li C_\li(\wuvt^\li)$ converges to $1$. Moreover, for an arbitrary but fixed integer $\li$ and for all $\mathbf{x}\in\mathcal{X}^\li$,
\begin{align}
\label{eq:3.10}
\sum_{\mathbf{y}\in \overline{\mathcal{Y}}}\wuvt^{\li}\left(\mathbf{y}|\mathbf{x}\right)=1.
\end{align}

Therefore, for all sufficiently large $\li$, w.h.p. using the distribution of the information stable process $\widehat{\ivt}$, the quantity on the LHS of (\ref{eq:2.2}) is approximately equal to $C_\li(\wuvt^\li)$. Thus any information stable input process $\widehat{\ivt}$, for sufficiently large $\li$, becomes a reasonable approximation of the input process ${\ivt}^*$ optimizing (\ref{eq:2.2}).
This encourages us to construct a new input process $\overline{\ivt}$ by maximizing, for every integer $\li$, the probability in (\ref{eq:3.1}) below (using a greedy approach):
\begin{align}
\label{eq:3.1}
\mathbbm{P}\left\{\left|  \frac{\infd_{\overline{\invt}^\li, \wuvt^\li}{(\overline{\invt}^\li;{\ouvt}\left(\invt^\li\right))}}{\li C_\li(\delep)}\right  |=1\right\}.
\end{align}

Through this process we are able to introduce such a process $\overline{\ivt}$ (with corresponding distribution $\bpvt$) that mimics the one for information stability in Definition~\ref{def:1}.

\subsubsection{Approximate Information Stable Processes}
\label{sec:3.1.1}
To find a system that obtains such a sub-optimal input distribution $\bpvt$ efficiently, one simple heuristic method is to maximize the probability in (\ref{eq:3.1}) greedily. 


 For fixed input block-length $\li$, we consider the set of all output sequences $\mathbf{y}$ in the concentration set ${\mathcal{B}}$. For each $\mathbf{y}$ in ${\mathcal{B}}$, we \emph{greedily} choose the corresponding $\mathbf{x} \in \mathcal{X}^\li$ that maximizes the {\it a posteriori} probability of an instance $\mathbf{x}$ being transmitted under the channel law $\wvt$ (this is intuitively where the term $\max_{\mathbf{x}}f_k\left(\mathbf{y}|\mathbf{x}\right)$ comes from) that shows up in Eq.~(\ref{eq:3.6}).\footnote{This procedure coincides with the greedy decoding suggested in~\cite{mitzenmacher2009survey} for the BDC, which can be used to derive lower bounds on $C(\delep)$. However, making use of the system~(\ref{eq:2.2}), the greedy selection is also capable to give upper bounds.}

Now, guided by the intuition in the previous paragraph about the LHS of (\ref{eq:2.2}) being approximately equal to $C_\li(\wuvt^\li)$ for many $j$, we {\it fix} the information density 
\begin{align*}
\infd_{\invt^\li ,\wuvt^\li}{\left({\mathbf{x}}_j ;{\mathbf{y}}_i\right)}=\log{\frac{\max_{\mathbf{x}\in\mathcal{X}^\li}f_k\left(\mathbf{y}_i|\mathbf{x}\right)}{\sum_{j=1}^{\li}{\overline{\pnvt}^{\li}_j\wuvt^{\li}\left(\mathbf{y}_i|\mathbf{x}_j\right)}}} 
\end{align*}
for each such $({\mathbf{x}}_j,{\mathbf{y}}_i)$
drawn in above to equal a certain constant $\overline{\lambda}_{\li}$ (which shows up later, in Eq.~(\ref{3.2})).\footnote{We do not claim to have an efficient computational process for determining this constant $\lambda(\li)$. However, this $\overline{\lambda}_{\li}(\delep)$ has a strong operational meaning -- it provides an outer bound on the capacity $C(\delep)$ of the deletion channel, as discussed in Eq.~(\ref{eq:3.2}).} The fixing of the information density $\infd_{\invt^\li ,\wuvt^\li}{({\mathbf{x}}_j ;{\mathbf{y}}_i)}$ is done in a manner such that, using Bayes' rule, a probability distribution ${P}_{\overline{\invt}^{\li}}=(\overline{\pnvt}^{\li}_j)_{j=1,\ldots,N}$ is induced on $\mathbf{x}_j$. In particular, the value of $\overline{\lambda}_{\li}$ is chosen so that the summation of $\overline{\pnvt}^{\li}_j$ over all $\mathbf{x}_j$ equals $1$.

\subsubsection{Simplified System}

Formally, we describe the new (as a  simplified version of~(\ref{eq:2.2})) system as follows.

For all $\mathbf{y}\in{\mathcal{B}}_k$, $k\in\mathcal{K}$ and some $\overline{\lambda}_{\li}\geq 1$, we let  
\begin{align}
\label{3.2}
\frac{1}{\li}\log{\frac{\max_{\mathbf{x}\in\mathcal{X}^\li}f_k\left(\mathbf{y}|\mathbf{x}\right)}{\sum_{j=1}^{\li}{\overline{\pnvt}^{\li}_j\wuvt^{\li}\left(\mathbf{y}|\mathbf{x}_j\right)}}} &=\overline{\lambda}_{\li}.
\end{align}

As explained above, by exhausting the set ${\mathcal{B}}_k$, the constraints in (\ref{3.2}) suggest a \emph{greedy} approach for finding the sub-optimal distribution $\bpvt$. 

%

Recall that (Definition~\ref{def:7} in Section~\ref{sec:3}) $\sum_{\mathbf{y}\in{\mathcal{B}_k}}f_k\left(\mathbf{y}|\mathbf{x}\right) = 1$ for all $\mathbf{x}\in\mathcal{X}^\li$ and $k\in\mathcal{K}$. Given an input process $\overline{\ivt}$ satisfying Eqs.~(\ref{3.2}) for each integer $\li$, we can rewrite the constraints in (\ref{3.2}) as
\begin{align*}
\frac{\max_{\mathbf{x}\in\mathcal{X}^\li}f_k\left(\mathbf{y}|\mathbf{x}\right)}{\sum_{j=1}^{\li}{\overline{\pnvt}^{\li}_j\wuvt^{\li}\left(\mathbf{y}|\mathbf{x}_j\right)}}=2^{\li\overline{\lambda}_{\li}}, \quad \forall \ \mathbf{y}\in\mathcal{A}_k.
\end{align*}

Summing both sides over all $\mathbf{y}\in\mathcal{B}_k$,
\begin{align*}
\sum_{\mathbf{y}\in\mathcal{B}_k}\frac{\max_{\mathbf{x}\in\mathcal{X}^\li}f_k\left(\mathbf{y}|\mathbf{x}\right)}{2^{\li\overline{\lambda}_{\li}}}&=\sum_{\mathbf{y}\in\mathcal{B}_k}\sum_{j=1}^{N}{\overline{\pnvt}^{\li}_j\wuvt^{\li}\left(\mathbf{y}|\mathbf{x}_j\right)}\\
&=\sum_{j=1}^{N}\overline{\pnvt}^{\li}_j\sum_{\mathbf{y}\in\mathcal{B}_k}\wuvt^{\li}\left(\mathbf{y}_i|\mathbf{x}_j\right)\\
&=\sum_{j=1}^{N}\overline{\pnvt}^{\li}_j=1.
\end{align*}

Multiplying both sides with $2^{\li\overline{\lambda}_{\li}}$ and taking logarithms,
\begin{align}
\label{eq:3.2}
\overline{C}_\li(\wuvt^\li):={\frac{1}{\li}}\max_{k\in\mathcal{K}}\log\left( \sum_{\mathbf{y}\in {\mathcal{B}_k}}\max_{\mathbf{x}\in\mathcal{X}^{\li}}f_k\left(\mathbf{y}|\mathbf{x}\right)\right)=\overline{\lambda}_{\li}.
\end{align}

The number of constraints in (\ref{3.2}) is much smaller than in (\ref{eq:2.2})), suggesting $\overline{\lambda}_{\li}\geq \lambda_{\li}$
(without a proof) for each $\li$. This indicates that the ML upper bound defined in Theorem~\ref{thm:1} makes sense. Next we prove Theorem~\ref{thm:0}.

\subsection{Proof of Theorem~\ref{thm:0}}
\label{subsec:3.2}

\begin{figure*}[t]
	\begin{align}
	\label{eq:3.4}
	C_\li(\wuvt^\li)\leq{\frac{1}{\li}\sum_{\left(\mathbf{x}_j,\mathbf{y}\right)\in {\mathcal{A}}}q_j\wuvt^{\li}\left(\mathbf{y}|\mathbf{x}_j\right) {\log{\frac{\wuvt^{\li}\left(\mathbf{y}|\mathbf{x}_j\right)}{\sum_{j=1}^{N}{\pnvt_j^*\wuvt^{\li}\left(\mathbf{y}|\mathbf{x}_j\right)}}}}}+\gamma_n.
	\end{align}
	\hrule
\end{figure*}

\begin{figure*}[t]
\begin{align}
\label{eq:3.13}
&\sum_{\left(\mathbf{x}_j,\mathbf{y}\right)\in {\mathcal{A}}}q_j\wuvt^{\li}\left(\mathbf{y}|\mathbf{x}_j\right) {\log{\frac{\wuvt^{\li}\left(\mathbf{y}|\mathbf{x}_j\right)}{\sum_{j=1}^{N}{\pnvt_j^*\wuvt^{\li}\left(\mathbf{y}|\mathbf{x}_j\right)}}}}\leq \sum_{k\in\mathcal{K}}\sum_{\left(\mathbf{x}_j,\mathbf{y}\right)\in {\mathcal{A}_k}}\frac{\wuvt^{\li}\left(\mathbf{y}|\mathbf{x}_j\right)}{f_k\left(\mathbf{y}|\mathbf{x}_j\right)}q_j f_k\left(\mathbf{y}|\mathbf{x}_j\right) {\log{\frac{\wuvt^{\li}\left(\mathbf{y}|\mathbf{x}_j\right)}{\sum_{j=1}^{N}{\pnvt_j^*\wuvt^{\li}\left(\mathbf{y}|\mathbf{x}_j\right)}}}}.
\end{align}
	\hrule
\end{figure*}

Denote by $\pvt^*$ the optimizing probability distribution maximizing the quantity in (\ref{eq:2.21}). Based on Definition~\ref{def:6}, Lemma~\ref{lemma:2} and Definition~\ref{def:7}, we prove Theorem~\ref{thm:0}.

Considering the constraints in (\ref{eq:2.2}),  it follows that
\begin{align}
\label{eq:3.3}
&C_\li(\wuvt^\li)=\frac{1}{\li}\sum_{i=1}^{M} \wuvt^{\li}\left(\mathbf{y}_i|\mathbf{x}_j\right)\log{\frac{\wuvt^{\li}\left(\mathbf{y}_i|\mathbf{x}_j\right)}{\sum_{j=1}^{N}{\pnvt_j^*\wuvt^{\li}\left(\mathbf{y}_i|\mathbf{x}_j\right)}}}
\end{align}
for all $ j\in\{1,\ldots,N\}$ with $\pnvt_j^*\neq 0$.


Now we introduce an auxiliary probability distribution $\qvt:=({q}_1,\ldots,{q}_{N})$ with ${q}_j=0$ once ${p}_j^*=0$ in the set $\psp^{N}$. Multiplying both sides of (\ref{eq:3.3}) by ${q}_j$ and summing over all $j$, 
\begin{align}
\nonumber
C_\li(\wuvt^\li)&\leq{\frac{1}{\li}\sum_{j=1}^{N}\sum_{i=1}^{M}q_j\wuvt^{\li}\left(\mathbf{y}_i|\mathbf{x}_j\right) {\log{\frac{\wuvt^{\li}\left(\mathbf{y}_i|\mathbf{x}_j\right)}{\sum_{j=1}^{N}{\pnvt_j^*\wuvt^{\li}\left(\mathbf{y}_i|\mathbf{x}_j\right)}}}}}.
\end{align}

Making use of the concentration set ${\mathcal{A}}$ in Definition~\ref{def:6}, we get~(\ref{eq:3.4}) where $\gamma_n\rightarrow 0$ as $\li\rightarrow\infty$. Moreover, the decomposition $\mathcal{A}=\bigcup_{k\in\mathcal{K}}\mathcal{A}_k$ (with $\mathcal{A}_k:=\mathcal{X}^\li\times\mathcal{B}_k$) yields~(\ref{eq:3.13}). Since logarithmic functions are concave and Eq.~(\ref{eq:3.11}) implies
\begin{align*}
\sum_{\left(\mathbf{x}_j,\mathbf{y}\right)\in {\mathcal{A}_k}}q_j f_k\left(\mathbf{y}|\mathbf{x}_j\right) =\sum_{\mathbf{y}\in\mathcal{B}_k}\sum_{j=1}^{N}q_j f_k\left(\mathbf{y}|\mathbf{x}_j\right)  = \sum_{j=1}^{N}q_j =1,
\end{align*}
applying Jensen's inequality to (\ref{eq:3.13}), it follows that
\begin{align}
\nonumber
&\sum_{k\in\mathcal{K}}\sum_{\left(\mathbf{x}_j,\mathbf{y}\right)\in {\mathcal{A}_k}}\frac{\wuvt^{\li}\left(\mathbf{y}|\mathbf{x}_j\right)}{f_k\left(\mathbf{y}|\mathbf{x}_j\right)}q_j f_k\left(\mathbf{y}|\mathbf{x}_j\right) {\log{\frac{\wuvt^{\li}\left(\mathbf{y}|\mathbf{x}_j\right)}{\sum_{j=1}^{N}{\pnvt_j^*\wuvt^{\li}\left(\mathbf{y}|\mathbf{x}_j\right)}}}}\\
\label{eq:3.7}
&\leq \max_{k\in\mathcal{K}}\log \bigg(\sum_{\left(\mathbf{x}_j,\mathbf{y}\right)\in {\mathcal{A}_k}}\frac{{q}_j \wuvt^{\li}\left(\mathbf{y}|\mathbf{x}_j\right)f_k\left(\mathbf{y}|\mathbf{x}_j\right)}{\sum_{j=1}^{N}{\pnvt_j^*\wuvt^{\li}\left(\mathbf{y}|\mathbf{x}_j\right)}}\bigg)
\end{align}
where the last inequality holds since Eq.~(\ref{eq:3.11}) guarantees that
\begin{align*}
\sum_{k\in\mathcal{K}}\max_{\left(\mathbf{x},\mathbf{y}\right)\in\mathcal{A}_k}\frac{\wuvt^{\li}\left(\mathbf{y}|\mathbf{x}\right)}{f_k\left(\mathbf{y}|\mathbf{x}\right)}\leq 1.
\end{align*}


Next, we set ${q}_j={\pnvt_j^*}$ for all $j\in\{1,\ldots,N\}$. The quantity inside the logarithm of (\ref{eq:3.7}) becomes 
\begin{align}
\label{eq:3.8}
&\sum_{\left(\mathbf{x}_j,\mathbf{y}\right)\in {\mathcal{A}_k}}\frac{{p}_j^*\wuvt^{\li}\left(\mathbf{y}|\mathbf{x}_j\right)f_k\left(\mathbf{y}|\mathbf{x}_j\right)}{\sum_{j=1}^{N}{\pnvt_j^*\wuvt^{\li}\left(\mathbf{y}|\mathbf{x}_j\right)}}\\
=&\sum_{\mathbf{y}\in\mathcal{B}_k}\frac{\sum_{j=1}^{N}{p}_j^*\wuvt^{\li}\left(\mathbf{y}|\mathbf{x}_j\right)f_k\left(\mathbf{y}|\mathbf{x}_j\right)}{\sum_{j=1}^{N}{\pnvt_j^*\wuvt^{\li}\left(\mathbf{y}|\mathbf{x}_j\right)}}\\
\leq& \sum_{\mathbf{y}\in {\mathcal{B}_k}}\max_{\mathbf{x}\in\mathcal{X}^\li}f_k\left(\mathbf{y}|\mathbf{x}\right).
\end{align}

Putting (\ref{eq:3.7}) and (\ref{eq:3.8}) into (\ref{eq:3.4}), for any concentration set ${\mathcal{A}}=\bigcup_{k\in\mathcal{K}}\mathcal{X}^\li\times\mathcal{B}_k$ and stochastic factors $\{f_k\}_k$,
\begin{align}
\nonumber
C_\li(\wuvt^\li)&\leq\overline{C}_\li(\wuvt^\li)\\
\label{eq:3.9}
&:={\frac{1}{\li}}\max_{k\in\mathcal{K}}\log\bigg( \sum_{\mathbf{y}\in {\mathcal{B}_k}}\max_{\mathbf{x}\in\mathcal{X}^\li}f_k\left(\mathbf{y}|\mathbf{x}\right)\bigg)+\gamma_\li. 
\end{align}




Note that the term $\gamma_\li$ is vanishing (in $\li$). Hence, for information stable channels, the general formula in (\ref{eq:lim}) implies that
\begin{align}
\nonumber
C\leq \liminf_{\li\rightarrow \infty}  \overline{C}_\li(\wuvt^\li).
\end{align}

This completes the proof of Theorem~\ref{thm:0}. \hfill $\square$

\subsection{Verification of the Tightness for the BEC and BSC}
\label{sec:3.3}
This section is devoted to verifying the tightness of the upper bound in Theorem~\ref{thm:0} on the BEC and the BSC. Denote by $p\in (0,1)$ the \emph{erasure/bit-flip probability}. Note that under the settings of the BEC$(p)$ and BSC$(p)$, we have the following realizations of the input and output spaces:
\begin{align*}
&\mathcal{X}_{\mathrm{BEC}}^\li=\mathcal{X}_{\mathrm{BSC}}^\li:=\left\{0,1\right\}^\li
\end{align*}
and
\begin{align*}
&\overline{\mathcal{Y}}_{\mathrm{BEC}}:=\left\{0,1,E\right\}^\li,\\
&\overline{\mathcal{Y}}_{\mathrm{BSC}}:=\left\{0,1\right\}^\li.
\end{align*}

Denote by $d_E\left(\cdot\right)$ the number of erasures (marked as $E$) in a given the Hamming distance and $d_H\left(\cdot,\cdot\right)$ the Hamming distance. We select the following concentration sets (and the corresponding decompositions) for the two types of channels respectively.

\begin{definition}
We consider the following concentration sets denoted by $\mathcal{A}^{\mathrm{BEC}}$ and $\mathcal{A}^{\mathrm{BSC}}$ for the BEC and BSC respectively according to Definition~\ref{def:6}:
\begin{align}
\label{eq:3.17}
&\mathcal{A}^{\mathrm{BEC}}:=\bigcup_{k\in\mathcal{K}_\varepsilon}\mathcal{A}^{\mathrm{BEC}}_k,\\
\label{eq:3.18}
&\mathcal{A}^{\mathrm{BSC}}:=\bigcup_{k\in\mathcal{K}_\varepsilon}\mathcal{A}^{\mathrm{BSC}}_k
\end{align}
where $\mathcal{A}^{\mathrm{BEC}}_k$ and  $\mathcal{A}^{\mathrm{BSC}}_k$ are defined as follows in agreement with Definition~\ref{def:7}:
\begin{align}
\nonumber
\mathcal{K}_\varepsilon&:=\left\lfloor\li \left(p-\varepsilon\right),\li \left(p+\varepsilon\right)\right\rceil,\\
\label{eq:3.23}
\mathcal{A}^{\mathrm{BEC}}_k&:=\Big\{\left(\mathbf{x},\mathbf{y}\right)\in\{0,1\}^\li\times\left\{0,1,E\right\}^\li: d_E\left(\mathbf{y}\right)=k\Big\},\\
\label{eq:3:24}
\mathcal{A}^{\mathrm{BSC}}_k&:=\Big\{\left(\mathbf{x},\mathbf{y}\right)\in\{0,1\}^\li\times\left\{0,1\right\}^\li: d_H\left(\mathbf{x},\mathbf{y}\right)=k\Big\}
\end{align}
and the sets $\mathcal{B}^{\mathrm{BEC}}_k$ and $\mathcal{B}^{\mathrm{BSC}}_k$ are defined as
\begin{align}
\label{eq:3.25}
\mathcal{B}^{\mathrm{BEC}}_k&:=\Big\{\mathbf{y}\in\left\{0,1,E\right\}^\li: d_E\left(\mathbf{y}\right)=k\Big\},\\
\label{eq:3.26}
\mathcal{B}^{\mathrm{BSC}}_k&:=\{0,1\}^\li
\end{align}
with some $\varepsilon$ satisfying $\min\{p,1-p\}>\varepsilon>0$.
\end{definition}

\begin{lemma}
	The \emph{concentration sets} $\mathcal{A}^{\mathrm{BEC}}$ and $\mathcal{A}^{\mathrm{BSC}}$  defined in above satisfy the conditions (\ref{eq:3.22})-(\ref{eq:3.15}) in Definition~\ref{def:6}.
\end{lemma}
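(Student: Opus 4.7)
The plan is to verify the two conditions separately for each channel, exploiting the fact that the capacity-achieving input distribution for both BEC$(p)$ and BSC$(p)$ is i.i.d.\ uniform, and that under this distribution the governing counting variable---the erasure count $d_E(\ouvt)$ for the BEC, and the error weight $d_H(\invt^\li,\ouvt)$ for the BSC---is Binomial$(\li,p)$. Both requirements in Definition~\ref{def:6} will then reduce to standard Binomial concentration around the mean.

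For condition~(\ref{eq:3.22}), the sets $\mathcal{A}^{\mathrm{BEC}}$ and $\mathcal{A}^{\mathrm{BSC}}$ are, by construction, precisely the events that a Binomial$(\li,p)$ random variable lies within $\varepsilon\li$ of its mean $\li p$. The weak law of large numbers (with rate $1-O(1/\li)$ via Chebyshev) immediately gives $\liminf_{\li\to\infty}\mathbbm{P}((\invt^\li,\ouvt)\in\mathcal{A})=1$ in both cases.

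For condition~(\ref{eq:3.15}), I first compute the information density in closed form. In the BEC case, for any $\mathbf{y}$ consistent with $\mathbf{x}$ on the non-erased coordinates, $\wuvt^\li(\mathbf{y}|\mathbf{x})=p^{d_E(\mathbf{y})}(1-p)^{\li-d_E(\mathbf{y})}$ and $P_\ouvt(\mathbf{y})=p^{d_E(\mathbf{y})}(1-p)^{\li-d_E(\mathbf{y})}\,2^{-(\li-d_E(\mathbf{y}))}$, so $\infd_{\invt^\li,\wuvt^\li}(\mathbf{x};\mathbf{y})=\li-d_E(\mathbf{y})$, and since $\li C_\li(\wuvt^\li)=\li(1-p)$,
\begin{align*}
\left|\frac{\infd_{\invt^\li,\wuvt^\li}(\invt^\li;\ouvt)}{\li C_\li(\wuvt^\li)}-1\right|=\frac{1}{1-p}\left|\frac{d_E(\ouvt)}{\li}-p\right|.
\end{align*}
In the BSC case, uniform i.i.d.\ input gives $P_\ouvt(\mathbf{y})=2^{-\li}$, so a direct expansion yields $\infd = \li(1+\log(1-p)) + d_H(\mathbf{x},\mathbf{y})\log\frac{p}{1-p}$, and combining with $\li C_\li(\wuvt^\li)=\li(1-H(p))$ gives
\begin{align*}
\left|\frac{\infd_{\invt^\li,\wuvt^\li}(\invt^\li;\ouvt)}{\li C_\li(\wuvt^\li)}-1\right|=\frac{\bigl|\log\tfrac{p}{1-p}\bigr|}{1-H(p)}\left|\frac{d_H(\invt^\li,\ouvt)}{\li}-p\right|.
\end{align*}

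In both cases the deviation is a $p$-dependent constant times $|S_\li/\li-p|$ with $S_\li\sim\mathrm{Binomial}(\li,p)$. Bounding the conditional expectation by $\mathbbm{E}|S_\li/\li-p|/\mathbbm{P}(\mathcal{A})\le \sqrt{p(1-p)/\li}/\mathbbm{P}(\mathcal{A})$, and invoking the already-established condition~(\ref{eq:3.22}) to ensure $\mathbbm{P}(\mathcal{A})\to 1$, shows that this quotient tends to $0$, establishing~(\ref{eq:3.15}). The main subtlety I anticipate is purely bookkeeping in the BEC case: $\infd$ is only meaningful on the support of the joint law (pairs with $\mathbf{y}$ consistent with $\mathbf{x}$), so one must check that intersecting $\mathcal{A}^{\mathrm{BEC}}$ with this support preserves all the probability mass, which holds since consistency occurs almost surely under the channel law. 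Apart from this, the argument reduces entirely to classical Binomial concentration and requires no channel-specific machinery.
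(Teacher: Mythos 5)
Your proposal is correct, and for condition~(\ref{eq:3.22}) it is essentially the paper's argument: both reduce membership in $\mathcal{A}$ to the event that a $\mathrm{Binomial}(\li,p)$ count ($d_E$ for the BEC, $d_H$ for the BSC) lies within $\varepsilon\li$ of its mean, the only difference being Chebyshev versus the Chernoff bound. For condition~(\ref{eq:3.15}) your route is genuinely more substantive than the paper's: the paper merely asserts that ``the information densities corresponding to the outliers must be bounded'' for the BEC and that the BSC case is ``automatic,'' whereas you compute the information density in closed form under the uniform i.i.d.\ input ($\infd=\li-d_E(\ouvt)$ for the BEC, and an affine function of $d_H$ for the BSC), reduce the conditional deviation to a constant multiple of $\mathbbm{E}\left|S_\li/\li-p\right|/\mathbbm{P}(\mathcal{A})$, and conclude by the $L^1$ law of large numbers. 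This buys an actual proof of the expectation condition, which the paper does not supply; your handling of the BEC support issue is also a point the paper glosses over. The one caveat is that Definition~\ref{def:6}, read literally, demands~(\ref{eq:3.15}) for \emph{every} source $\ivt$, while your computation (like the paper's) only treats the capacity-achieving uniform i.i.d.\ input; since the denominator $\li C_\li(\wuvt^\li)$ is the supremum over inputs, the literal ``every source'' requirement cannot hold for, say, a degenerate input, so restricting to the optimal input---which is all that the proof of Theorem~\ref{thm:0} uses---is the right reading, but you should state that restriction explicitly. A second minor edge case worth a remark is $p=1/2$ for the BSC, where $C_\li=0$ and the ratio in~(\ref{eq:3.15}) is of the form $0/0$.
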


\begin{proof}
For the BEC$(p)$, since each bit of the length-$\li$ input sequence is erased i.i.d., the number of erased bits is distributed according to Bernoulli$(\li,p)$. Therefore, fix input length $\li$, by the Chernoff bound~ (see \cite{chernoff1981note,mitzenmacher2005probability}), source $\ivt$, the probability of the output sequence being inside the concentration set $\mathcal{A}^{\mathrm{BEC}}$ defined in (\ref{eq:3.17}) is always bounded from below by $1-2\exp\left(-\li p\varepsilon^2/3\right)$. Moreover, the information densities corresponding to the outliers must be bounded. Thus, the concentration set $\mathcal{A}^{\mathrm{BEC}}$ satisfies the condition in (\ref{eq:3.15}). For the BSC, we have a trivial decomposition $\mathcal{A}^{\mathrm{BSC}}=\{0,1\}^\li\times\overline{\mathcal{Y}}_{\mathrm{BSC}}$, automatically guarantees the condition in (\ref{eq:3.15}).
\end{proof}

Furthermore, we associate the following stochastic factors to both channels.

\begin{definition}
The \emph{stochastic factors} for the BEC and BSC are defined as follows.
\begin{align}
	\label{eq:3.16}
	&f_k^{\mathrm{BEC}}\left(\mathbf{x},\mathbf{y}\right):=
	\begin{cases}
	{1}/{{\li\choose k}} \quad &\text{if } d_E\left(\mathbf{y}\right) =k\\
	0 &\text{otherwise}
	\end{cases}
	\end{align}
	and
	\begin{align}
	\label{eq:3.19}
	&f_k^{\mathrm{BSC}}\left(\mathbf{x},\mathbf{y}\right):=
	\begin{cases}
	{1}/{{\li \choose k}} \quad &\text{if } d_H\left(\mathbf{x},\mathbf{y}\right) =k\\
	0 &\text{otherwise}
	\end{cases}
	\end{align}
	with $1\leq k\leq \li$ satisfying
	\begin{align}
	\label{eq:3.21}
	\left\lfloor\li \left(p-\varepsilon\right)\right\rfloor \leq k\leq \left\lceil\li \left(p+\varepsilon\right)\right\rceil
	\end{align}
	for some $\varepsilon\in\left(0,\min\{p,1-p\}\right)$.
\end{definition}

\begin{lemma}
	The \emph{stochastic factors} $f_k^{\mathrm{BEC}}\left(\mathbf{x},\mathbf{y}\right)$ and $f_k^{\mathrm{BSC}}\left(\mathbf{x},\mathbf{y}\right)$  defined in above satisfy the conditions (\ref{eq:3.11})-(\ref{eq:3.12}) in Definition~\ref{def:7}.
\end{lemma}

\begin{proof}
We check the stochastic factors defined above satisfy the conditions in (\ref{eq:3.11})-(\ref{eq:3.12}). For the BEC$(p)$, plugging in (\ref{eq:3.23}) and (\ref{eq:3.16}),
\begin{align}
\nonumber
\sum_{\mathbf{y}\in{\mathcal{B}^{\mathrm{BEC}}_k}}f^{\mathrm{BEC}}_k\left(\mathbf{y}|\mathbf{x}\right) = \frac{\left|\mathcal{B}^{\mathrm{BEC}}_k\right|}{2^{\li-k}}\cdot \frac{1}{{\li\choose k}}&=1, \ \ \forall \ \mathbf{x}\in\mathcal{X}^{\li}, k\in\mathcal{K}.
\end{align}
Since $\wuvt^{\li}\left(\mathbf{y}|\mathbf{x}\right)=p^{k}\left(1-p\right)^{\li-k}$ for all $\left(\mathbf{x},\mathbf{y}\right)\in\mathcal{A}^{\mathrm{BEC}}_k$,
\begin{align*}
&\sum_{k\in\mathcal{K}}\max_{\left(\mathbf{x},\mathbf{y}\right)\in\mathcal{A}^{\mathrm{BEC}}_k}\frac{\wuvt^{\li}\left(\mathbf{y}|\mathbf{x}\right)}{f^{\mathrm{BEC}}_k\left(\mathbf{y}|\mathbf{x}\right)}\\
&=\sum_{k=\left\lfloor\li \left(p-\varepsilon\right)\right\rfloor}^{\left\lceil\li \left(p+\varepsilon\right)\right\rceil}{\li\choose k}p^{k}\left(1-p\right)^{\li-k}\in [0,1]
\end{align*}
showing that $f_k^{\mathrm{BEC}}\left(\mathbf{x},\mathbf{y}\right)$ are stochastic factors. 

For the the BSC$(p)$, similarly, based on the definitions in (\ref{eq:3.19}), since for each fixed $\mathbf{x}\in\{0,1\}^\li$, there are in total $\li\choose k$ many $\mathbf{y}\in\mathcal{B}_k^{\mathrm{BSC}}=\{0,1\}^\li$ satisfying $d_H\left(\mathbf{x},\mathbf{y}\right)=k$, it follows that
\begin{align}
\nonumber
\sum_{\mathbf{y}\in{\mathcal{B}_k^{\mathrm{BSC}}}}f_k^{\mathrm{BSC}}\left(\mathbf{y}|\mathbf{x}\right) &= {\li\choose k}\cdot \frac{1}{{\li\choose k}}=1, \ \ \forall \ \mathbf{x}\in\mathcal{X}^{\li}, k\in\mathcal{K}.
\end{align}
Moreover, $\wuvt^{\li}\left(\mathbf{y}|\mathbf{x}\right)=p^{k}\left(1-p\right)^{\li-k}$ for all $\left(\mathbf{x},\mathbf{y}\right)\in\mathcal{A}^{\mathrm{BSC}}_k$, clearly,
\begin{align}
0&\leq \sum_{k\in\mathcal{K}}\max_{\left(\mathbf{x},\mathbf{y}\right)\in\mathcal{A}^{\mathrm{BSC}}_k}\frac{\wuvt^{\li}\left(\mathbf{y}|\mathbf{x}\right)}{f^{\mathrm{BSC}}_k\left(\mathbf{y}|\mathbf{x}\right)}.
\end{align}

From the definition of the set $\mathcal{A}^{\mathrm{BSC}}_k$,
\begin{align*}
&\sum_{k\in\mathcal{K}}\max_{\left(\mathbf{x},\mathbf{y}\right)\in\mathcal{A}^{\mathrm{BSC}}_k}\frac{\wuvt^{\li}\left(\mathbf{y}|\mathbf{x}\right)}{f^{\mathrm{BSC}}_k\left(\mathbf{y}|\mathbf{x}\right)}\\
=& \sum_{k\in\mathcal{K}}\max_{\left(\mathbf{x},\mathbf{y}\right):d_H\left(\mathbf{x},\mathbf{y}\right)=k}\frac{\wuvt^{\li}\left(\mathbf{y}|\mathbf{x}\right)}{f^{\mathrm{BSC}}_k\left(\mathbf{y}|\mathbf{x}\right)}\\
=&\sum_{k=\left\lfloor\li \left(p-\varepsilon\right)\right\rfloor}^{\left\lceil\li \left(p+\varepsilon\right)\right\rceil}{\li\choose k}p^{k}\left(1-p\right)^{\li-k}\in [0,1]
\end{align*}
verifying that the stochastic factors in (\ref{eq:3.16})-(\ref{eq:3.19}) satisfy the conditions in Definition~\ref{def:6} and Definition~\ref{def:7}.
\end{proof}

Based on the concentration sets $\mathcal{A}^{\mathrm{BEC}}$ and $\mathcal{A}^{\mathrm{BSC}}$ defined in (\ref{eq:3.17})-(\ref{eq:3.26}) and the stochastic factors in (\ref{eq:3.16})-(\ref{eq:3.19}), the ML upper bound in Theorem~\ref{thm:0} is tight, as stated in the following theorem.
\begin{theorem}[Tightness for the BEC and BSC] Let $p\in(0,1)$ be the \emph{erasure/bit-flip probability}.
The ML upper bound in Theorem~\ref{thm:0} is tight for the BEC and BSC, \emph{i.e.,}
\begin{align*}
&\liminf_{\li\rightarrow\infty}{\frac{1}{\li}}\max_{k\in\mathcal{K}_{\varepsilon}}\log\Big( \sum_{\mathbf{y}\in {\mathcal{B}^{\mathrm{BEC}}_k}}\max_{\mathbf{x}\in\mathcal{X}_{\mathrm{BEC}}^\li}f^{\mathrm{BEC}}_k\left(\mathbf{y}|\mathbf{x}\right)\Big) = 1-p,\\
&\liminf_{\li\rightarrow\infty}{\frac{1}{\li}}\max_{k\in\mathcal{K}_{\varepsilon}}\log\Big( \sum_{\mathbf{y}\in {\mathcal{B}^{\mathrm{BSC}}_k}}\max_{\mathbf{x}\in\mathcal{X}_{\mathrm{BSC}}^\li}f^{\mathrm{BSC}}_k\left(\mathbf{y}|\mathbf{x}\right)\Big) = 1-h\left(p\right)
\end{align*}
where $h(p):=-p\log p -\left(1-p\right)\log \left(1-p\right)$ denotes the binary entropy.
\end{theorem}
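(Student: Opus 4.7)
The plan is to simplify $\overline{C}_\li(\wuvt^\li)$ by direct counting, exploiting that both stochastic factors are uniform over their supports so that the inner maximum $\max_{\mathbf{x}} f_k(\mathbf{y}|\mathbf{x})$ trivializes to $1/{\li \choose k}$. The argument then reduces to elementary Stirling asymptotics together with a monotonicity check on the binary entropy $h$. I carry this out in four steps: evaluate the inner maximum; compute $|\mathcal{B}_k|$; optimize over $k \in \mathcal{K}_\varepsilon$; and pass $\li \to \infty$ followed by $\varepsilon \downarrow 0$.

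For the BEC, $f_k^{\mathrm{BEC}}(\mathbf{y}|\mathbf{x})$ is independent of $\mathbf{x}$ and equal to $1/{\li \choose k}$ on $\mathcal{B}_k^{\mathrm{BEC}}$, so the inner max equals $1/{\li \choose k}$. Since $|\mathcal{B}_k^{\mathrm{BEC}}| = {\li \choose k}\, 2^{\li-k}$ (choose the $k$ erasure positions, then fill the remaining $\li-k$ coordinates arbitrarily with bits), the inner sum collapses to $2^{\li-k}$. Thus $\tfrac{1}{\li}\log(2^{\li-k}) = 1 - k/\li$, which is maximized over $\mathcal{K}_\varepsilon$ at the smallest admissible $k \sim \li(p-\varepsilon)$, giving a $\liminf$ of $1 - p + \varepsilon$; letting $\varepsilon \downarrow 0$ then delivers $1-p$.

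For the BSC, $\mathcal{B}_k^{\mathrm{BSC}} = \{0,1\}^\li$, and for every fixed $\mathbf{y}$ there exists some $\mathbf{x}$ at Hamming distance exactly $k$, so $\max_{\mathbf{x}} f_k^{\mathrm{BSC}}(\mathbf{y}|\mathbf{x}) = 1/{\li \choose k}$ (the value is the same for every $\mathbf{y}$). The sum equals $2^\li/{\li \choose k}$ and therefore $\overline{C}_\li = \max_{k \in \mathcal{K}_\varepsilon}\bigl(1 - \tfrac{1}{\li}\log{\li \choose k}\bigr)$. By the standard Stirling estimate $\tfrac{1}{\li}\log {\li \choose \lfloor \li\alpha\rfloor} \to h(\alpha)$ uniformly on compact subintervals of $(0,1)$, maximizing over $k \in \mathcal{K}_\varepsilon$ is equivalent to minimizing $h(k/\li)$. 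Because $h$ is strictly unimodal with peak at $1/2$, the extremum is attained at the endpoint of $\mathcal{K}_\varepsilon$ farthest from $\li/2$ (near $\li(p-\varepsilon)$ when $p<1/2$, near $\li(p+\varepsilon)$ when $p>1/2$, the case $p=1/2$ being immediate). Taking the $\liminf$ and then $\varepsilon \downarrow 0$ yields $1-h(p)$.

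The only delicate point is the role of $\varepsilon$: for each fixed $\varepsilon > 0$ the quantity $\overline{C}_\li$ exceeds the target by a slack of $\varepsilon$ (resp.\ $|h(p) - h(p\mp\varepsilon)|$), so the claimed equality must be interpreted as the assertion that the tightest member of the parametric family of admissible concentration sets / stochastic factors indexed by $\varepsilon \in (0, \min\{p,1-p\})$ attains the capacity. Everything else is bookkeeping: computing $|\mathcal{B}_k|$ and a single invocation of Stirling. No combinatorial obstacle arises because both stochastic factors were chosen specifically to wash out the dependence on $\mathbf{x}$ and reduce the maximization to a scalar optimization over the erasure/flip count.
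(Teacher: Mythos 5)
Your proof is correct and follows essentially the same route as the paper's: evaluate the inner maximum as $1/\binom{\li}{k}$, count $\bigl|\mathcal{B}_k\bigr|$ (${\li\choose k}2^{\li-k}$ for the BEC, $2^{\li}$ for the BSC), and apply Stirling. In fact you are slightly more careful than the paper, which simply evaluates at $k=\lceil \li p\rceil$ and invokes ``$\varepsilon$ arbitrarily small,'' whereas you correctly locate where the maximum over $\mathcal{K}_\varepsilon$ is attained and make explicit that the stated equality holds only after letting $\varepsilon\downarrow 0$ over the admissible family.
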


\begin{proof}
	Note that the parameter $\varepsilon>0$ can be arbitrarily small, taking $k=\lceil\li p\rceil$ and applying Theorem~\ref{thm:0}, the capacity of the BEC denoted by $C_{\mathrm{BEC}}(p)$ satisfies
	\begin{align*}
	C_{\mathrm{BEC}}(p)\leq& \liminf_{\li\rightarrow\infty} {\frac{1}{\li}}\log\bigg(\sum_{\mathbf{y}\in {\mathcal{B}^{\mathrm{BEC}}_{\lceil\li p\rceil}}}\max_{\mathbf{x}\in\mathcal{X}_{\mathrm{BEC}}^\li}f^{\mathrm{BEC}}_{\lceil\li p\rceil}\left(\mathbf{y}|\mathbf{x}\right)\bigg)\\
	=&\liminf_{\li\rightarrow\infty} {\frac{1}{\li}}\log\bigg({\left| \mathcal{B}_{\lceil\li p\rceil}^{\mathrm{BEC}}\right|}{\big/}{{\li\choose \lceil\li p\rceil}}\bigg).
	\end{align*}
	
	Putting $\left| \mathcal{B}_{\lceil\li p\rceil}^{\mathrm{BEC}}\right| =2^{\li\left(1-p\right)}{\li\choose \lceil\li p\rceil}$ into above,
	\begin{align}
	\nonumber
	C_{\mathrm{BEC}}(p)&\leq\liminf_{\li\rightarrow\infty} {\frac{1}{\li}}\log \left(2^{\li\left(1-p\right)}{\li\choose \lceil\li p\rceil}{\big/}{{\li\choose \lceil\li p\rceil}} \right)\\
	\label{eq:3.27}
	&=1-p.
	\end{align}
	
	Furthermore, for the BSC, the capacity $C_{\mathrm{BSC}}(p)$ is bounded from above as
	\begin{align*}
	C_{\mathrm{BSC}}(p)\leq& \liminf_{\li\rightarrow\infty} {\frac{1}{\li}}\log\bigg( \sum_{\mathbf{y}\in {\mathcal{B}^{\mathrm{BSC}}_{\lceil\li p\rceil}}}\max_{\mathbf{x}\in\mathcal{X}_{\mathrm{BSC}}^\li}f^{\mathrm{BSC}}_{\lceil\li p\rceil}\left(\mathbf{y}|\mathbf{x}\right)\bigg)\\
	=&\liminf_{\li\rightarrow\infty} {\frac{1}{\li}}\log\left(\left| \mathcal{B}_{\lceil\li p\rceil}^{\mathrm{BSC}}\right|{\big/}{{\li\choose \lceil\li p\rceil}}\right).
	\end{align*}
	
	Since $\left| \mathcal{B}_{\lceil\li p\rceil}^{\mathrm{BSC}}\right| =2^{\li}$, it follows that
	\begin{align}
	\label{eq:3.28}
	C_{\mathrm{BSC}}(p)\leq\liminf_{\li\rightarrow\infty} {\frac{1}{\li}}\log \left(2^{\li}{\big/}{{\li\choose \lceil\li p\rceil}} \right)=1-h(p).
	\end{align}
\end{proof}


The theorem above indicates that the ML upper bound in Theorem~\ref{thm:0} is actually tight for some memoryless channels, \emph{e.g.},  the BEC$(p)$ and BSC$(p)$. In next section, we analyze the binary deletion channel (BDC) as an example for channels with memory, and show that the ML upper bound is capable of providing a nontrivial and explicit approximation for the capacity $C(\delep)$ ($\delep$ denotes the deletion probability) assuming Hypothesis~\ref{hypo:2} in Section~\ref{sec:4.3}. 



\section{Binary Deletion Channel}
\label{sec:4}

For brevity, we consider specifically the binary deletion channel (BDC), though the approach generalizes to arbitrary alphabet sizes. The following section is  devoted to summarizing related work on finding the capacity $C(\delep)$ of the BDC. In particular, we focus on the existing upper bounds on $C(\delep)$ and the known asymptotic results. The survey by Mitzenmacher~\cite{mitzenmacher2009survey} elucidates critical problems, useful techniques and further applications in a more comprehensive way. The recent paper by Cheraghchi also provides a decent summary of the state-of-the-art literature~\cite{cheraghchi2017capacity}. Before moving to the contexts, we first give a brief summary of known bounds on $C(\delep)$, the capacity of the BDC, at the risk of missing much of the literature.



\subsection{Previous Work}

\subsubsection{Existing Upper Bounds}

Recently, Cheraghchi in~\cite{cheraghchi2017capacity} gave an explicit and concise upper bound on $C(d)$ such that $C(d)\leq1-d\log\left({4}/{\phi}\right)$ for $d< {1}/{2}$ and $C(d)\leq(1-d)\log \phi$ for $d\geq {1}/{2}$ where $\phi:=\left({1+\sqrt{5}}\right)/{2}$ is the golden ratio. The bound was obtained by first deriving an upper bound on $C(1/2)$; then applying the fact that $C(d)$ is convex as showed in \cite{rahmati2015upper}.

Running the BAA (cf.~\cite{arimoto1972algorithm,blahut1972computation}) up to $\li=17$,  tighter numerical upper bounds were provided in~\cite{fertonani2010novel} improving the upper bounds in~\cite{diggavi2007capacity} for a wide range of the deletion probability $d$. They proved that increasing the dimension $\li$ in the BAA always provides a better upper bound on $C(\delep)$. The convexity of $C(d)$ in~\cite{rahmati2015upper} can be used to  tighten the bounds in~\cite{fertonani2010novel}.
However, the space complexity of BAA is exponential in $\li$, prohibiting obtaining better bounds by trying larger dimensions. Following previous literature (\cite{dalai2011new}), we sometimes replace the maximized finite-length mutual information $C_\li(\wuvt^\li)$ with $C_\li(\delep)$ since the quantity is determined entirely by $\delep$ and $\li$ in particular for deletion channels. Denote by $C_{\li,T}^{\mathrm{BAA}}(d)$ the approximation of $C_{\li}(\delep)$ using the BAA with $T$ iterations. {\it A priori}, if one could use the BAA to obtain the value of $C_\li(\delep)$ precisely, one would be able to get a ${\log(\li+1)}/{\li}$ additive approximation to the capacity $C(\delep)$ (for $\li\leq 17$). In~\cite{fertonani2010novel}, essentially this approach is followed to obtain numerical solutions for~(\ref{eq:2.2}), for $\li\leq 17$. 

However, this approach has several limitations. On the one hand, in direct implementation of BAA, as $\li$ grows, it becomes computationally intractable even to store the variables to be computed. One the other hand, as BAA is itself an iterative algorithm attempting to solve the non-convex optimization problem~(\ref{eq:2.2}), and to the best of our knowledge there are no guarantees on how quickly $C_{\li,T}^{\mathrm{BAA}}(d)$ converges to $C_n(d)$ as a function of the number of iterations $t$.

In fact, in ~\cite{dobrushin1967shannon}, Dobrushin showed the following quantitative bound on $C_\li(\delep)$ providing a ${\log(\li+1)}/{\li}$-additive approximation to $C(\delep)$ (see also, ~\cite{dalai2011new, kanoria2013optimal}. A tighter bound can be found in~\cite{kalai2010tight}):
\begin{align}
\label{eq:2.0}
{\ C_\li(\delep)}- \frac{\log(\li+1)}{\li}\leq C(\delep) \leq {\ C_\li(\delep)}. 
\end{align}



\subsubsection{Asymptotic Results}

Besides the upper bounds, in \cite{kanoria2013optimal} Kanoria and Montanari gave a polynomial expression of channel capacity which is optimal with a residual term $O(d^{3-\varepsilon})$ through the optimality of sources with i.i.d. coordinates distributed as Bernoulli~$(1/2)$ when $\delep=0$ and a perturbed version when $\delep$ is slightly larger than $0$.  Meanwhile for the regime $\delep\rightarrow 1$, Dalai in \cite{dalai2011new} provided an  asymptotic result $\lim_{\delep\rightarrow 1}C(\delep)/(1-\delep)\leq 0.4143$ with constant surpassing the upper bound $0.49$ given in \cite{fertonani2010novel} by Fertonani and Duman.  Furthermore, \cite{drmota2012mutual} studied the mutual information for deletion channels concerning both general sources and memoryless sources. In particular, their results for memoryless sources coincide with Kanoria and Montanari's expansion of mutual information~\cite{kanoria2013optimal} as $\delep\rightarrow 0$. 

In the sequel, we apply the ML upper bound in Theorem~\ref{thm:0} to derive approximations for the capacity $C(\delep)$ of the BDC.

\subsection{Capacity Upper Bound via Theorem~\ref{thm:0}}
\label{sec:4.2}
\subsubsection{Concentration Set and Stochastic Factors}
\label{sec:4.2.1}
The input and output spaces of the BDC are
\begin{align*}
&\mathcal{X}_{\mathrm{BDC}}^\li:=\left\{0,1\right\}^\li
\end{align*}
and
\begin{align*}
&\overline{\mathcal{Y}}_{\mathrm{BDC}}:=\bigcup_{1\leq \lo\leq \li}\left\{0,1\right\}^\lo.
\end{align*}

We consider the following concentration set and the corresponding decompositions for the BDC:

\begin{definition}[Concentration Set for the BDC]
	\label{def:8}
	For the BDC, we define the following \emph{concentration set} according to Definition~\ref{def:6}:
	\begin{align}
	\label{eq:4.2}
	\mathcal{A}^{\mathrm{BDC}}:=\bigcup_{k\in\mathcal{K}}\mathcal{A}^{\mathrm{BDC}}_k
	\end{align}
	with the following decomposition in agreement with Definition~\ref{def:7}:
	\begin{align}
	\nonumber
	\mathcal{K}_{\varepsilon}&:=\left\lfloor\li \left(\delep-\varepsilon\right),\li \left(\delep+\varepsilon\right)\right\rceil,\\
	\nonumber
	\mathcal{A}^{\mathrm{BDC}}_k&:=\{0,1\}^\li\times\left\{0,1\right\}^{\li-k},\\
	\label{eq:4.5}
	\mathcal{B}^{\mathrm{BDC}}_k&:=\{0,1\}^{\li-k}.
	\end{align}
\end{definition}

As $\li$ grows without bound, standard concentration inequalities (for instance, Chernoff bound) imply that the length of the output sequence $\lo$ is tightly concentrated around the ``typical length'' $\li(1-\delep)$ in probability. 
Using the bounds in (\ref{eq:2.0}),~\cite{dalai2011new} showed that $C(\delep)$ is continuous (the continuity can be verified via various approaches, for instance, the information spectrum method~\cite{koga2013information}). Leveraging the continuity,  Theorem~$1$ in~\cite{dalai2011new} (\emph{cf.}~\cite{fertonani2010novel,kanoria2013optimal}) proved that it is sufficient to consider output sequences with lengths in $\mathcal{K}_{\varepsilon}=\left\lfloor\li \left(\delep-\varepsilon\right),\li \left(\delep+\varepsilon\right)\right\rceil$. In detail, Dalai showed that the following lemma holds.
\begin{lemma}[Theorem 1~\cite{dalai2011new}]
	\label{lemma:1}
For any $0<\varepsilon\leq \min\{\delep,1-\delep\}$,
	 \begin{align*}
	 &\liminf_{\li\rightarrow\infty}\frac{1}{\li}\mathbbm{E}\left[\infd_{\invt^\li, \wuvt^\li}{(\invt^\li;\ouvt\left(\invt^\li\right))}\big | \left(\mathbf{X}^\li,\mathbf{Y}\right)\in\mathcal{A}^{\mathrm{BDC}}\right]\\
	  =&\liminf_{\li\rightarrow\infty}C_\li\left(\wuvt^\li\right) = C(\delep).
	 \end{align*}
\end{lemma}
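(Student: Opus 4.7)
The plan is to verify the two equalities separately. The second, $\liminf_{\li\rightarrow\infty}C_\li(\wuvt^\li)=C(\delep)$, is immediate from the information-stability formula~(\ref{eq:lim}) together with Dobrushin's result~\cite{dobrushin1967shannon} that the BDC is information stable, so no additional work is required there.

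The first equality---that restricting the expectation to $\mathcal{A}^{\mathrm{BDC}}$ does not affect the $\liminf$---is the substantive statement. I would fix for each block length $\li$ a capacity-achieving input distribution, so that $\frac{1}{\li}\mathbbm{E}\bigl[\infd_{\invt^\li,\wuvt^\li}(\invt^\li;\ouvt(\invt^\li))\bigr]$ equals $C_\li(\wuvt^\li)$ exactly. The law of total expectation then decomposes this unconditional expectation into a piece conditioned on $\mathcal{A}^{\mathrm{BDC}}$ weighted by $\mathbbm{P}(\mathcal{A}^{\mathrm{BDC}})$, plus an analogous complement piece; it therefore suffices to show that $\mathbbm{P}(\mathcal{A}^{\mathrm{BDC}})\rightarrow 1$ and that the complement contribution is negligible. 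The first ingredient is a direct Chernoff bound: because the number of deletions is Binomial$(\li,\delep)$, one has $\mathbbm{P}\bigl((\mathcal{A}^{\mathrm{BDC}})^{c}\bigr)\leq 2\exp(-\li\delep\varepsilon^{2}/3)$, an exponential decay in $\li$. The second ingredient is a uniform $O(1)$ bound on $|\infd/\li|$: the upper side follows from $P_{\ouvt}(\mathbf{y})\geq\pnvt^{*}(\mathbf{x}^\li)\,\wuvt^\li(\mathbf{y}|\mathbf{x}^\li)$ (hence $\infd\leq-\log\pnvt^{*}(\mathbf{x}^\li)\leq\li\log 2$ in the typical case), while the lower side uses $\wuvt^\li(\mathbf{y}|\mathbf{x}^\li)\geq\delep^{\li-\lo}(1-\delep)^{\lo}$ for any consistent input/output pair. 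Multiplying an $O(1)$ quantity by an exponentially vanishing probability wipes out the tail, and the first equality drops out.

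The main obstacle is the uniform control of the information density on the atypical event $(\mathcal{A}^{\mathrm{BDC}})^{c}$: the pointwise bound is coarse in pathological cases (such as very short outputs), so some care is needed to track constants. A cleaner alternative, closer in spirit to Dalai's original Theorem~1 in~\cite{dalai2011new}, is to bound $\frac{1}{\li}\mathbbm{E}\bigl[\infd\mid(\mathcal{A}^{\mathrm{BDC}})^{c}\bigr]$ directly by the maximum per-letter entropy $\log 2$, making the residual contribution at most $\log 2\cdot\mathbbm{P}((\mathcal{A}^{\mathrm{BDC}})^{c})\rightarrow 0$. Either route yields the desired equality, with the Chernoff concentration of the output length doing all the real work.
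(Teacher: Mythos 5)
The paper does not actually prove this lemma: it is imported verbatim as Theorem~1 of Dalai~\cite{dalai2011new}, with the surrounding text only remarking that the result follows from the bounds in~(\ref{eq:2.0}) and the continuity of $C(\delep)$. So there is no in-paper argument to compare yours against; what you have written is a from-scratch proof, and in outline it is the standard and correct one. The second equality is indeed immediate from~(\ref{eq:lim}) plus Dobrushin's information stability of the BDC, and the first follows from your total-expectation decomposition once you control the atypical contribution. The one genuine gap is the ``uniform $O(1)$ bound on $|\infd/\li|$'' that you yourself flag: the upper bound $\infd\leq-\log\pnvt^{*}(\mathbf{x}^\li)$ is \emph{not} uniformly $O(\li)$, since the capacity-achieving distribution may assign exponentially-in-$\li^2$ small mass to some inputs in its support, and your fallback of bounding $\frac{1}{\li}\mathbbm{E}[\infd\mid(\mathcal{A}^{\mathrm{BDC}})^{c}]$ by $\log 2$ is not justified as stated, because an expectation conditioned on an event is not a conditional mutual information and need not obey per-letter entropy bounds. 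The clean repair uses the specific structure of $\mathcal{A}^{\mathrm{BDC}}$: membership in $(\mathcal{A}^{\mathrm{BDC}})^{c}$ is determined solely by the number of deletions, which is independent of $\invt^\li$, so
\begin{align*}
\mathbbm{E}\left[-\log\pnvt^{*}(\invt^\li)\,\mathds{1}_{(\mathcal{A}^{\mathrm{BDC}})^{c}}\right]=H(\invt^\li)\,\mathbbm{P}\left((\mathcal{A}^{\mathrm{BDC}})^{c}\right)\leq \li\cdot 2\exp\left(-\li\delep\varepsilon^{2}/3\right),
\end{align*}
and the lower side is handled exactly as you say by $\wuvt^\li(\mathbf{y}|\mathbf{x}^\li)\geq\delep^{\li-\lo}(1-\delep)^{\lo}$ on consistent pairs. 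With that substitution your argument closes, and it has the merit of being self-contained where the paper simply defers to~\cite{dalai2011new}.
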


The lemma above validates  that $\mathcal{A}^{\mathrm{BDC}}$ is a concentration set according to Definition~\ref{def:6}.

\begin{lemma}
	The \emph{concentration set} $\mathcal{A}^{\mathrm{BDC}}$ defined in above satisfies the conditions (\ref{eq:3.22})-(\ref{eq:3.15}) in Definition~\ref{def:6}.
\end{lemma}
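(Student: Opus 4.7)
The plan is to verify conditions~(\ref{eq:3.22}) and~(\ref{eq:3.15}) separately for $\mathcal{A}^{\mathrm{BDC}}$.

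For condition~(\ref{eq:3.22}), I would observe that the number of deleted bits $D_\li := \li - \lo$ is Binomial$(\li,\delep)$ irrespective of the input process $\ivt$, since each bit is deleted i.i.d.\ Bernoulli$(\delep)$. Membership of $(\invt^\li,\ouvt)$ in $\mathcal{A}^{\mathrm{BDC}}$ is equivalent to $D_\li \in \mathcal{K}_\varepsilon$, so the Chernoff (Hoeffding) bound yields
\begin{align*}
\mathbbm{P}\bigl((\invt^\li,\ouvt) \notin \mathcal{A}^{\mathrm{BDC}}\bigr) \;=\; \mathbbm{P}\bigl(|D_\li - \li\delep|>\li\varepsilon\bigr) \;\leq\; 2\exp(-2\li\varepsilon^2),
\end{align*}
from which~(\ref{eq:3.22}) follows.

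For condition~(\ref{eq:3.15}), the plan is to combine the information stability of the BDC, established in~\cite{dobrushin1967shannon}, with Lemma~\ref{lemma:1} and a boundedness argument. Information stability directly gives that the normalized information density $Z_\li := \infd_{\invt^\li,\wuvt^\li}(\invt^\li;\ouvt)/(\li C_\li(\wuvt^\li))$ converges to $1$ in probability. To upgrade this probabilistic convergence to the required conditional $L^1$ statement, I would verify that $|Z_\li|$ is uniformly bounded in $\li$: the numerator $|\log \wuvt^\li(\ouvt|\invt^\li)|$ is $O(\li)$ because the deletion channel law is a product of at most $\li$ Bernoulli factors, while $-\log P_\ouvt(\ouvt)$ is at most $O(\li)$ since $|\overline{\mathcal{Y}}| \leq (\li+1)\cdot 2^\li$ forces any nonzero $P_\ouvt$ value to be at least $2^{-O(\li)}$. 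Bounded convergence then yields $\mathbbm{E}|Z_\li - 1| \rightarrow 0$, and combining with $\mathbbm{P}(\mathcal{A}^{\mathrm{BDC}}) \rightarrow 1$ from the first part gives
\begin{align*}
\mathbbm{E}\bigl[|Z_\li - 1|\,\big|\,\mathcal{A}^{\mathrm{BDC}}\bigr] \;\leq\; \frac{\mathbbm{E}|Z_\li - 1|}{\mathbbm{P}(\mathcal{A}^{\mathrm{BDC}})} \;\longrightarrow\; 0.
\end{align*}

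The main obstacle is precisely this passage from convergence in probability (the content of information stability) to the $L^1$ convergence required by Definition~\ref{def:6}; the uniform boundedness of $Z_\li$ is the crucial ingredient and needs care with the lower bound on $P_\ouvt$. An alternative and cleaner route is to invoke Lemma~\ref{lemma:1}, which asserts that $\frac{1}{\li}\mathbbm{E}[\infd \mid \mathcal{A}^{\mathrm{BDC}}] \rightarrow C(\delep)$, so the conditional mean of $Z_\li$ already converges to $1$; combining this mean convergence with the probabilistic convergence supplied by information stability, together with the same boundedness bound, yields the $L^1$ conclusion directly, sidestepping a separate uniform integrability argument.
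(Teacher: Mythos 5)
Your treatment of condition (\ref{eq:3.22}) coincides with the paper's: both reduce membership of $(\invt^\li,\ouvt)$ in $\mathcal{A}^{\mathrm{BDC}}$ to the event that the $\mathrm{Binomial}(\li,\delep)$ number of deletions lies in $\mathcal{K}_\varepsilon$ and apply a Chernoff/Hoeffding bound. That part is fine and is exactly what the paper does.

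For condition (\ref{eq:3.15}) there is a genuine gap. Writing $Z_\li := \infd_{\invt^\li,\wuvt^\li}(\invt^\li;\ouvt)/(\li C_\li(\wuvt^\li))$, your passage from convergence in probability to conditional $L^1$ convergence rests on the claim that $|Z_\li|$ is uniformly bounded, justified by the assertion that $|\overline{\mathcal{Y}}|\leq(\li+1)2^\li$ ``forces any nonzero $P_{\ouvt}$ value to be at least $2^{-O(\li)}$.'' That inference is false: the cardinality of a distribution's support places no lower bound on its nonzero atoms. Concretely, an input law putting mass $2^{-\li^2}$ on the all-ones word and the remaining mass on the all-zeros word gives $P_{\ouvt}(1^{\lo})=2^{-\Omega(\li^2)}$, so conditioned on $\invt^\li=1^\li$ the information density is $\Omega(\li^2)$ and $Z_\li$ is unbounded; bounded convergence therefore does not apply. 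What \emph{is} true is one-sided: on the support of the channel law $\wuvt^{\li}(\mathbf{y}|\mathbf{x})\geq(\min\{\delep,1-\delep\})^{\li}$ while $P_{\ouvt}(\mathbf{y})\leq 1$, so $Z_\li$ is uniformly bounded \emph{below}; combined with $\mathbbm{E}[Z_\li]=\mif(\invt^\li;\ouvt)/(\li C_\li(\wuvt^\li))\leq 1$ this controls $\mathbbm{E}[(Z_\li-1)^{+}]$ by $\mathbbm{E}[(1-Z_\li)^{+}]$ and rescues the $L^1$ statement (for a source with $\mathbbm{E}[Z_\li]\to 1$, i.e.\ an asymptotically capacity-achieving one) --- but that is a different argument from the one you wrote. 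For comparison, the paper's own proof of (\ref{eq:3.15}) is simply a one-line appeal to Lemma~\ref{lemma:1} (Dalai's Theorem~1), which directly supplies the conditional-expectation statement; your closing alternative gestures at that route, but it still invokes ``the same boundedness bound'' and so inherits the flaw.
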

\begin{proof}
Since by definition the concentration set $\mathcal{A}^{\mathrm{BDC}}$ contains all received codewords with lengths in the range $\left\lfloor\li \left(\delep-\varepsilon\right),\li \left(\delep+\varepsilon\right)\right\rceil$, standard concentration inequalities (\emph{e.g.} Chernoff bound) guarantees the condition in (\ref{eq:3.22}), and
\begin{align}
\nonumber
&\liminf_{\li\rightarrow\infty}\mathbbm{P}\left(\left(\mathbf{X}^\li,\mathbf{Y}\right)\in\mathcal{A}^{\mathrm{BDC}}\right)=1.
\end{align}

Moreover, using Lemma~\ref{lemma:1} above, directly,
\begin{align*}
&\limsup_{\li\rightarrow\infty}\mathbbm{E}\left[\left|\frac{\infd_{\invt^\li, \wuvt^\li}{(\invt^\li;\ouvt\left(\invt^\li\right))}}{\li C_\li\left(\wuvt^\li\right)}-1\right|\Big | \left(\mathbf{X}^\li,\mathbf{Y}\right)\in\mathcal{A}^{\mathrm{BDC}}\right]\\
&=0.
\end{align*}

\end{proof}
 
Before proceeding to the corresponding stochastic factors, it is helpful to introduce a quantity pertinent to relationships between length-$\li$ input sequences ${x}$ and length-$\lo$ output sequences ${\mathbf{y}}$.

\begin{definition}[Deletion Pattern]
	\label{def:3}
	A \emph{deletion pattern} from a length-$\li$ input sequence ${\mathbf{x}}$ to a length-$\lo$ output sequence ${\mathbf{y}}$ is a binary sequence denoted by $\mathbf{d}\in\{0,1\}^\li$. If the $i$-th coordinate $d_i=1$, the corresponding $i$-th coordinate $x_i$ in $\mathbf{x}$ is deleted; otherwise $d_i=0$ implies that $x_i$ is kept.
\end{definition} 

Below we define a function computing the number of possible ways that a fixed length-$\li$ sequence is deleted to form a fixed shorter length-$\lo$ sequence. Previous studies have been focusing on similar quantities, Drmota \emph{et al.} defined a similar quantity as the number of occurrences of a shorter sequence in a longer sequence, Liron and langberg characterized the number of subsequences obtained from a fixed length-$\li$ sequence via deletions~\cite{liron2015characterization,drmota2012mutual}, to name just a few. 

\begin{definition}[Number of Deletion Patterns\footnote{This counting function is alternatively called \emph{hidden pattern matching function} in~\cite{bourdon2002generalized} using the terminology in statistics.}]
	\label{def:2}
	We define the {{\it number of deletion patterns}} as a quantity $\#\left(\mathbf{x},\mathbf{y}\right)\in\left\{1,\ldots, {\li\choose \lo}\right\}$ counting the number of distinct deletion patterns from an input ${\mathbf{x}}\in\{0,1\}^{\li}$ to an output ${\mathbf{y}}\in\{0,1\}^{\lo}$.
\end{definition} 

Over the years it has been repeatedly noted that the number of deletion patterns plays an important role in finding the capacity $C(\delep)$. Part of the reason is that the number of deletion patterns can be regarded as a ``normalized version'' of the transition probability $\wuvt^{\li}\left({\mathbf{y}}|{\mathbf{x}}\right)$, as the following remark explains .


\begin{remark}
	\label{remark:1}
	Denote by $\wuvt^{\li}\left(\cdot|\cdot\right)$ the corresponding stochastic matrix for the BDC (with block length $\li$). Since the probability of a particular deletion pattern of weight $\li-\lo$ occurring equals $(1-\delep)^{\lo}\delep^{\li-\lo}$, hence
	\begin{align}
	\label{eq:4.7}
	\#\left({\mathbf{x}},{\mathbf{y}}\right)=\frac{\wuvt^{\li}\left({\mathbf{y}}|{\mathbf{x}}\right)}{(1-\delep)^{\lo}\delep^{\li-\lo}}.
	\end{align}
	The fact that $\#\left({\mathbf{y}},{\mathbf{x}}\right)$ is a scaled version of $\wuvt^{\li}\left({\mathbf{y}}|{\mathbf{x}}\right)$ (corresponding to a conditional probability distribution for the input $\mathbf{x}$ being mapped to the fixed output $\mathbf{y}$) takes on operational significance later. For instance, it ensures us to define stochastic functions. In Section~\ref{sec:4.5}, on the other hand, we utilize the operational meaning of $\#\left({\mathbf{y}},{\mathbf{x}}\right)$  to derive explicit bounds on $C(\delep)$.
\end{remark}

\begin{definition}
	\label{def:9}
	The corresponding stochastic factors for the BDC are set to be
	\begin{align}
	\label{eq:4.6}
	&f_k^{\mathrm{BDC}}\left(\mathbf{x},\mathbf{y}\right):=
	\begin{cases}
	{	\#\left({\mathbf{x}},{\mathbf{y}}\right)}/{{\li\choose k}} \quad &\text{if } \mathbf{y}\in\{0,1\}^{\li-k}\\
	0 &\text{otherwise}
	\end{cases}.
	\end{align}
\end{definition}

It remains to check the validity of the stochastic factors $f_k^{\mathrm{BDC}}\left(\mathbf{x},\mathbf{y}\right)$. We first claim that they satisfy the definition of stochastic factors.
\begin{lemma}
	The \emph{stochastic factors} $f_k^{\mathrm{BEC}}\left(\mathbf{x},\mathbf{y}\right)$ and $f_k^{\mathrm{BSC}}\left(\mathbf{x},\mathbf{y}\right)$  defined in above satisfy the conditions (\ref{eq:3.11})-(\ref{eq:3.12}) in Definition~\ref{def:7}.
\end{lemma}

\begin{proof}
	Plugging in (\ref{eq:4.5}) and (\ref{eq:4.6}),
	\begin{align}
	\nonumber
	&\sum_{\mathbf{y}\in{\mathcal{B}^{\mathrm{BDC}}_k}}f^{\mathrm{BDC}}_k\left(\mathbf{y}|\mathbf{x}\right) =\sum_{\mathbf{y}\in\{0,1\}^{\li-k}}\frac{	\#\left({\mathbf{x}},{\mathbf{y}}\right)}{{\li\choose k}}  =1, \\
	& \forall \ \mathbf{x}\in\mathcal{X}^{\li}, k\in\mathcal{K}_\varepsilon.
	\end{align}
	
	Considering (\ref{eq:4.7}),
	\begin{align} &\sum_{k\in\mathcal{K}}\max_{\left(\mathbf{x},\mathbf{y}\right)\in\mathcal{A}^{\mathrm{BDC}}_k}\frac{\wuvt^{\li}\left(\mathbf{y}|\mathbf{x}\right)}{f^{\mathrm{BDC}}_k\left(\mathbf{y}|\mathbf{x}\right)}\\
	=&\sum_{k=\left\lfloor\li \left(p-\varepsilon\right)\right\rfloor}^{\left\lceil\li \left(p+\varepsilon\right)\right\rceil}{\li\choose k}p^{k}\left(1-p\right)^{\li-k}\in [0,1]
	\end{align}
	showing that $f_k^{\mathrm{BDC}}\left(\mathbf{x},\mathbf{y}\right)$ are stochastic factors.
\end{proof}


Making use of the concentration set and stochastic factors constructed in (\ref{eq:4.5}) and (\ref{eq:4.6}), and substituting them into the ML upper bound in Theorem~\ref{thm:0}, the following upper bound on $C(\delep)$ holds.
\begin{theorem}
	\label{thm:1}
	For all $\delep\in (0,1)$ and $\lo:=\lceil\li(1-\delep)\rceil$, the capacity $C(\delep)$ of the BDC is bounded from above by
	\begin{align}
	\label{eq:4.8}
	C(\delep)\leq\overline{C}(\delep):=\liminf_{\li\rightarrow\infty}\overline{C}_{\li}(\delep)- h(\delep)
	\end{align}
where $h(d)=-\delep\log \delep-\left(1-\delep\right)\log\left(1-\delep\right)$ denotes the binary entropy and
\begin{align}
\label{eq:4.10}
\overline{C}_{\li}(\delep):=\frac{1}{\li}\log\bigg(\sum_{\mathbf{y}\in\{0,1\}^{\lo}}\max_{\mathbf{x}\in\{0,1\}^{\li}}\#\left({\mathbf{x}},{\mathbf{y}}\right)\bigg).
\end{align}
\end{theorem}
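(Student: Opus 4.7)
The plan is to obtain Theorem~\ref{thm:1} as a direct specialization of Theorem~\ref{thm:0}, using the concentration set $\mathcal{A}^{\mathrm{BDC}}$ from Definition~\ref{def:8} together with the stochastic factors $f_k^{\mathrm{BDC}}$ from Definition~\ref{def:9}, whose admissibility was already verified in the two preceding lemmas. First I would plug the explicit form $f_k^{\mathrm{BDC}}(\mathbf{y}|\mathbf{x})=\#(\mathbf{x},\mathbf{y})/\binom{\li}{k}$ (supported on $\mathbf{y}\in\{0,1\}^{\li-k}$) into the expression~(\ref{eq:3.6}) for $\overline{C}_\li(\wuvt^\li)$. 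Since the normalizer $1/\binom{\li}{k}$ depends on neither $\mathbf{x}$ nor $\mathbf{y}$, it factors out of both the inner maximum and the outer summation, leaving
\begin{equation*}
\overline{C}_\li(\wuvt^\li)=\max_{k\in\mathcal{K}_\varepsilon}\left\{\tfrac{1}{\li}\log S_k(\li)-\tfrac{1}{\li}\log\tbinom{\li}{k}\right\},\quad S_k(\li):=\sum_{\mathbf{y}\in\{0,1\}^{\li-k}}\max_{\mathbf{x}\in\{0,1\}^\li}\#(\mathbf{x},\mathbf{y}).
\end{equation*}

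Next I would apply Stirling's approximation $\tfrac{1}{\li}\log\binom{\li}{k}=h(k/\li)+O((\log\li)/\li)$. Since every $k\in\mathcal{K}_\varepsilon$ satisfies $k/\li\in[\delep-\varepsilon,\delep+\varepsilon]$ and the binary entropy $h$ is continuous at $\delep\in(0,1)$, one obtains $h(k/\li)\geq h(\delep)-\eta(\varepsilon)$ uniformly in $k$, with $\eta(\varepsilon)\to 0$. This handles the second term inside the braces.

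The main obstacle is the first term: the maximum of $\tfrac{1}{\li}\log S_k(\li)$ is taken over a slab of lengths $\li-k$, whereas the theorem's quantity $\overline{C}_\li(\delep)$ is pinned to the canonical length $\lo=\lceil\li(1-\delep)\rceil$, i.e., to the single index $k^\ast:=\li-\lo$. The plan here is a padding/truncation bookkeeping: any $\mathbf{y}\in\{0,1\}^{\li-k}$ may be converted to a string $\tilde{\mathbf{y}}\in\{0,1\}^{\lo}$ by inserting or deleting at most $|k-k^\ast|=O(\li\varepsilon)$ symbols, and the deletion-pattern count transfers between $\mathbf{y}$ and $\tilde{\mathbf{y}}$ with a multiplicative loss controlled by $\binom{\li}{O(\li\varepsilon)}=2^{\li\,O(\varepsilon)}$. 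After a standard double-counting argument this yields $\max_{k\in\mathcal{K}_\varepsilon}\tfrac{1}{\li}\log S_k(\li)\leq \tfrac{1}{\li}\log S_{k^\ast}(\li)+\eta'(\varepsilon)+o_\li(1)$, i.e., the slab-maximum is dominated by the canonical-length value up to a term vanishing with $\varepsilon$.

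Assembling the three estimates and invoking Theorem~\ref{thm:0} gives
\begin{equation*}
C(\delep)\leq\liminf_{\li\to\infty}\overline{C}_\li(\wuvt^\li)\leq\liminf_{\li\to\infty}\overline{C}_\li(\delep)-h(\delep)+\eta''(\varepsilon),
\end{equation*}
and since $\varepsilon>0$ was arbitrary with $\eta''(\varepsilon)\to 0$, letting $\varepsilon\downarrow 0$ closes the argument. The truly delicate step is the subsequence extension/contraction bookkeeping used to reduce the slab-maximum to $\overline{C}_\li(\delep)$; if the elementary counting sketched above proves too lossy, a backup is to reprove Theorem~\ref{thm:0} with a refined concentration set that directly pins the output length to $\lo=\lceil\li(1-\delep)\rceil$, which costs a slightly more careful verification of property~(\ref{eq:3.22}) but removes the slab altogether.
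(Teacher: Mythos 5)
Your proposal is correct and follows essentially the same route as the paper: specialize Theorem~\ref{thm:0} to the BDC concentration set of Definition~\ref{def:8} and the stochastic factors of Definition~\ref{def:9}, pull the normalizer $1/\binom{\li}{k}$ out of the inner maximum and the sum, and identify $\frac{1}{\li}\log\binom{\li}{\lo}$ with $h(\delep)$ in the limit. The only place you go beyond the paper is the reduction of the maximum over $k\in\mathcal{K}_\varepsilon$ to the single canonical index $k^\ast=\li-\lceil\li(1-\delep)\rceil$ --- the paper simply asserts ``taking the maximizing $k$'' and evaluates at that one length without justification, whereas your padding/truncation bookkeeping (or your backup of pinning the output length inside the concentration set) is precisely the argument needed to make that step rigorous.
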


\begin{proof}
Since $0<\varepsilon\leq \min\{\delep,1-\delep\}$ is arbitrary, taking the maximizing $k=\lo:=\lceil\li(1-\delep)\rceil$ and applying Theorem~\ref{thm:0}, the capacity of the BDC satisfies
\begin{align}
\nonumber
C(\delep)\leq& \liminf_{\li\rightarrow\infty} {\frac{1}{\li}}\log\bigg( \sum_{\mathbf{y}\in {\mathcal{B}^{\mathrm{BDC}}_{\lo}}}\max_{\mathbf{x}\in\mathcal{X}_{\mathrm{BDC}}^\li}f^{\mathrm{BDC}}_{\lo}\left(\mathbf{y}|\mathbf{x}\right)\bigg).
\end{align}

Applying the definitions of the sets $\mathcal{B}^{\mathrm{BDC}}_{\lo}$ and the stochastic factors $f^{\mathrm{BDC}}_{\lo}$ (in Definition~\ref{def:8} and~\ref{def:9}),
\begin{align}
\label{eq:4.9}
C(\delep)\leq&\liminf_{\li\rightarrow\infty} {\frac{1}{\li}}\log\bigg( \sum_{\mathbf{y}\in \{0,1\}^{\lo}}\max_{\mathbf{x}\in\{0,1\}^{\li}}\frac{\#\left({\mathbf{x}},{\mathbf{y}}\right)}{{\li\choose \lo}}\bigg)\\
\nonumber
=&\liminf_{\li\rightarrow\infty} {\frac{1}{\li}}\log\bigg( \sum_{\mathbf{y}\in \{0,1\}^{\lo}}\max_{\mathbf{x}\in\{0,1\}^{\li}}{\#\left({\mathbf{x}},{\mathbf{y}}\right)}\bigg) \\
\nonumber
\quad &+\liminf_{\li\rightarrow\infty} {\frac{1}{\li}}\log\frac{1}{{\li\choose \lo}}\\
\nonumber
=&\overline{C}_{\li}\left(\delep\right)-h(\delep),
\end{align}
which gives the desired bound in (\ref{eq:4.8}).
\end{proof}

\begin{remark}
Note that for any $\mathbf{x}\in\{0,1\}^\li$ and $\mathbf{y}\in\{0,1\}^\lo$, it always holds that
\begin{align*}
0\leq \#\left({\mathbf{x}},{\mathbf{y}}\right)\leq {\li\choose \lo}.
\end{align*}

Putting this into (\ref{eq:4.9}), we recover the trivial upper bound $C(\delep)\leq 1-\delep$.
\end{remark}

\begin{figure}[h!]
	\centering
	\includegraphics[scale=0.45]{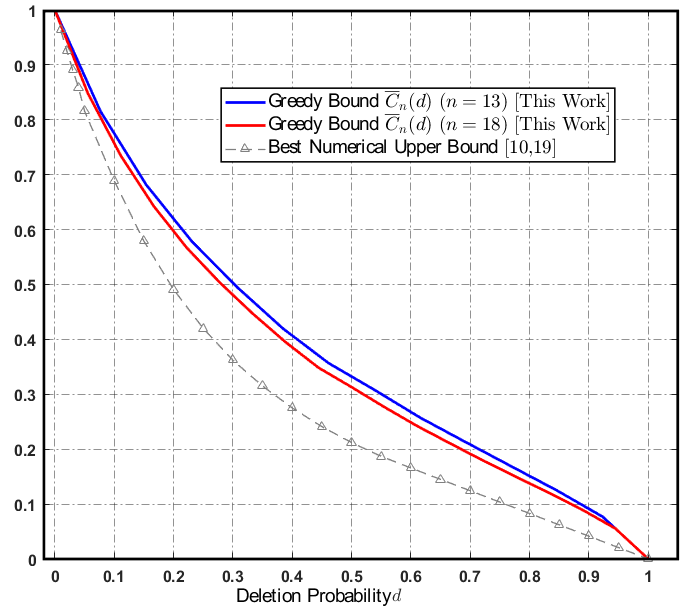}
	\caption{The ML upper bounds (solid, blue and red) $\overline{C}_{\li}(\delep)$ for BDC with block-length $\li=18$ and $\li=13$, together with the (convexified) numerical estimate of the capacity-proxy $C_\li(\delep)$ (dashed and marked black) for $\li=17$. The lower curve (dashed gray) is also known as the best numerical upper bounds provided in~\cite{rahmati2015upper}.}
	\label{fig:val}
	\medskip
	\hrule
\end{figure}


\subsubsection{Experimental Results}

We implement the BAA up to $\li=18$ to compare $\overline{C}_{\li}(\delep)$,  ${C}_{\li}(\delep)$ and the best known numerical bounds on $C(\delep)$. See Fig.~\ref{fig:val}.


In next section we make use of the operational meaning of the number of deletion patterns and analyze the quantity $\overline{C}_{\li}(\delep)$ in a more careful way. This allows us to derive upper bounds on $C(\delep)$ based on an approximation ratio of a combinatorial problem  defined later in Section~\ref{sec:4.2}.

\begin{table*}[h!]
	\centering
	\setlength\extrarowheight{1.35pt}
	\begin{tabular}{|c c c | c c c|} 
		\firsthline
		& \multicolumn{2}{c}{Exact} & \multicolumn{2}{|c}{ Approximation} & \multicolumn{1}{c|}{Duplication}  \\ 
		\cline{2-5} 
		$\mathbf{y}$ & $\mathbf{x}$ & $\overline{\#}\left({\mathbf{y}}\right)$ & $\mathbf{x}_{\mathrm{dup}}$ & $\overline{\#}_{\mathrm{dup}}\left({\mathbf{y}}\right) $ & Ratio \\ [0.5ex] 
		\hline\hline
		$0000$ & $00000000$ & $70$ & $00000000$ & $70$ & $1$ \\ 
		$0001$ & $00000011$ & $40$ & $00000011$ & $40$ & $1$ \\
		$0010$ & $00001100$ & $24$ & $00001100$ & $24$ & $1$ \\ 
		$0011$ & $00001111$ & $36$ & $00001111$ & $36$ & $1$ \\
		$0100$ & $00110000$ & $24$ & $00110000$ & $24$ & $1$ \\ [1ex] 
		$0101$ & $\mathbf{00101011}$ & $16$ & $\mathbf{00110011}$ & $16$ & $1$ \\ [1ex] 
		$0110$ & $00111100$ & $24$ & $00111100$ & $24$ & $1$ \\
		$0111$ & $00111111$ & $40$ & $00111111$ & $40$ & $1$\\
		\hline 
	\end{tabular}
	\vspace{3pt}
	\caption{A table showing the corresponding ratios given $\lo=4$ and $\li=8$. When $\lo\leq 4$, the ratio is always $1$.}
	\label{table:1}
\end{table*}

\begin{table*}[h!]
	\centering
	\setlength\extrarowheight{1.35pt}
	\begin{tabular}{|c c c | c c c|} 
		\firsthline
		& \multicolumn{2}{c}{Exact} & \multicolumn{2}{|c}{ Approximation} & \multicolumn{1}{c|}{Duplication} \\ 
		\cline{2-5} 
		$\mathbf{y}$ & $\mathbf{x}$ & $\overline{\#}\left({\mathbf{y}}\right)$ & $\mathbf{x}_{\mathrm{dup}}$ & $\overline{\#}_{\mathrm{dup}}\left({\mathbf{y}}\right) $ & Ratio \\ [0.5ex] 
		\hline\hline
		$0000000$ & $00000000000000$ & $3432$ & $00000000000000$ & $3432$ & $1$ \\ 
		$0000001$ & $00000000000011$ & $1848$ & $00000000000011$ & $1848$ & $1$ \\
		$0000010$ & $00000000001100$ & $1008$ & $00000000001100$ & $1008$ & $1$ \\ 
		$0000011$ & $00000000001111$ & $1512$ & $00000000001111$ & $1512$ & $1$ \\
		$0000100$ & $00000000110000$ & $840$ & $00000000110000$ & $840$ & $1$ \\ 
		$0000101$ & $\mathbf{00000000101011}$ & $602$ & $\mathbf{00000000110011}$ & $560$ & $0.93023.$ \\
		$0000110$ & $00000000111100$ & $840$ & $00000000111100$ & $840$ & $1$ \\
		$0000111$ & $00000000111111$ & $1400$ & $00000000111111$ & $1400$ & $1$\\
		$0000111$ & $00000000111111$ & $1400$ & $00000000111111$ & $1400$ & $1$\\
		$\vdots$ & 	$\vdots$ & 	$\vdots$ & 	$\vdots$ & 	$\vdots$ & 	$\vdots$\\
		$0001010$ & $\mathbf{00000000111111}$ & $396$ & $\mathbf{00000011001100}$ & $320$ & $0.80808.$\\
		$0001011$ & $\mathbf{00000010101111}$ & $530$ & $\mathbf{00000011001111}$ & $480$ & $0.90566.$\\
		$0010100$ & $\mathbf{00001010101000}$ & $351$ & $\mathbf{00001100110000}$ & $288$ & $0.90566.$\\
		$0010101$ & $\mathbf{00001010101011}$ & $270$ & $\mathbf{00001100110011}$ & $192$ & $0.71111.$\\
		$0010110$ & $\mathbf{00001010111100}$ & $312$ & $\mathbf{00001100111100}$ & $288$ & $0.92308.$\\
		$0010111$ & $\mathbf{00001010111111}$ & $530$ & $\mathbf{00001100111111}$ & $480$ & $0.90566.$\\
		$0011010$ & $\mathbf{00001111010100}$ & $300$ & $\mathbf{00001111001100}$ & $288$ & $0.96$\\
		$0100101$ & $\mathbf{00110000101011}$ & $200$ & $\mathbf{00110000110011}$ & $192$ & $0.96$\\
		$0101000$ & $\mathbf{01010101000000}$ & $396$ & $\mathbf{00110011000000}$ & $320$ & $0.80808.$\\
		$0101001$ & $\mathbf{01010101000011}$ & $231$ & $\mathbf{00110011000011}$ & $192$ & $0.83117.$\\ [1ex]
		$0101010$ & $\mathbf{00101010101010}$ & $204$ & $\mathbf{00110011001100}$ & $128$ & $\mathbf{0.62745.}$\\ [1ex]
		$0101011$ & $\mathbf{00101010101111}$ & $270$ & $\mathbf{00110011001111}$ & $192$ & $0.71111.$\\
		$0101100$ & $\mathbf{00101011110000}$ & $300$ & $\mathbf{00110011110000}$ & $288$ & $0.96$\\
		$0101101$ & $\mathbf{00101011110011}$ & $200$ & $\mathbf{00110011110011}$ & $192$ & $0.96$\\
		$0101110$ & $\mathbf{00101011111100}$ & $340$ & $\mathbf{00110011111100}$ & $320$ & $0.94118.$\\
		$0101111$ & $\mathbf{00101011111111}$ & $602$ & $\mathbf{00110011111111}$ & $560$ & $0.93023.$\\
		$0110100$ & $\mathbf{00111101010000}$ & $312$ & $\mathbf{00111100110000}$ & $280$ & $0.89743.$\\
		$0110101$ & $\mathbf{00111101010101}$ & $231$ & $\mathbf{00111100110011}$ & $192$ & $0.83117.$\\
		$0111010$ & $\mathbf{00111111010100}$ & $340$ & $\mathbf{00111111001100}$ & $320$ & $0.94118.$\\
		\hline 
	\end{tabular}
	\vspace{3pt}
	\caption{An incomplete list comparing the maximal number of deletion patterns with the duplication approximations for $\lo=7$ and $\li=14$. The table contains all ratios that are not equal to one. The smallest duplication ratio $\gamma\left(\li,\digamma\right)$ is obtained at the ``flipping sequence''  $\mathbf{y}_{\mathrm{flip}}=0101010$.}
	\label{table:2}
\end{table*}

\subsection{Maximal Number of Deletion Patterns}
\label{sec:4.3}


The remaining context of this paper is dedicated to approximate the terms in (\ref{eq:4.9}), which is summarized as the following combinatorial problem.

\subsubsection{The Maximum Deletion Matching problem}
\begin{definition}[MDM Problem]
	\label{def:4}
The \emph{maximum deletion matching (MDM) problem} is to solve the following. Let $\li,\lo\in\mathbbm{N}^{+}$ with $1\leq\lo\leq\li$. Given an arbitrary length-$\lo$ binary sequence $\mathbf{y}\in\{0,1\}^\lo$, the goal is to find the maximum corresponding length-$\li$ binary sequence $\mathbf{x}$ such that
\begin{align*}
\overline{\mathbf{x}}\left(\mathbf{y}\right):=\argmax_{\mathbf{x}\in\{0,1\}^{\li}}\#\left({\mathbf{x}},{\mathbf{y}}\right)
\end{align*}
where $\#\left({\mathbf{x}},{\mathbf{y}}\right)$ denotes the number of deletion patterns of generating $\mathbf{y}$ from $\mathbf{x}$ defined in Definition~\ref{def:2}. For notational convenience, write the \emph{maximal number of deletion patterns}
$$\overline{\#}\left({\mathbf{y}}\right):=\max_{\mathbf{x}\in\{0,1\}^{\li}}\#\left({\mathbf{x}},{\mathbf{y}}\right) = \#\left(\overline{\mathbf{x}}\left(\mathbf{y}\right),{\mathbf{y}}\right).$$
\end{definition}

\subsubsection{Run-length Representation}

One way to approximate $\overline{\#}\left({\mathbf{y}}\right)$ and find approximation ratio of the MDM problem is to consider consecutive bits in $\mathbf{y}$ as a  ``run'' and jointly a distribution of run-lengths. Although encoding each sequence $\mathbf{y}$ to the run-length representation suffers a loss of information (for instance, the ordering of runs is no longer kept in the run-length representation), it offers a concise approach to describe a binary sequence. Reprising the definitions from previous work~\cite{kirsch2010directly,kanoria2013optimal,liron2015characterization}, we consider the follows.

First, we associate each length-$\lo$  binary sequence ${\mathbf{y}}$ with an integer sequence embedding the information of number of consecutive bits of ${\mathbf{y}}$. We call a subsequence $y_{i},\ldots,y_{i+\ell-1}$ of ${\mathbf{y}}$ an \emph{$\ell$-run} if all the bits in the subsequence are the same and they differ from the bits next to the subsequence, \emph{i.e.,} $y_{i}\neq y_{i-1},y_{i+\ell-1}=\cdots=y_{i}$ and $y_{i+\ell}\neq y_{i+\ell-1}$.
Thus, let $R^{(\ell)}_{\mathbf{y}}$ be an integer counting the number of $\ell$-runs in ${\mathbf{y}}$. It follows that
\begin{align}
\label{eq:4.3}
\sum_{\ell=1}^{\lo}\ell R^{(\ell)}_{\mathbf{y}} = \lo
\end{align}
for all ${\mathbf{y}}\in\{0,1\}^{\lo}$.

\subsubsection{Approximation and Duplication Ratio}
Figuring out the the maximal number of deletion patterns $\overline{\#}\left({\mathbf{y}}\right)$ as an explicit expression using the number of $\ell$-runs $R^{(1)}_{\mathbf{y}}$,$\ldots,R^{(\lo)}_{\mathbf{y}}$ is a nontrivial task. Instead, one might turn to consider approximations of the quantity $\overline{\#}\left({\mathbf{y}}\right)$. Suppose $\digamma:={\li}/{\lo}$ is an integer. Intuitively, duplicating $\digamma$ times each bit in $\mathbf{y}$ may provide a decent estimate of $\overline{\#}\left({\mathbf{y}}\right)$, which motivates the following definition.

\begin{definition}
	\label{def:5}
	Suppose $\lo$ divides $\li$. Denoted by $\digamma:={\li}/{\lo}\in\mathbbm{N}^{+}$. We define the \emph{duplication ratio} $\gamma\left(\li,\digamma,\mathbf{y}\right)\leq 1$ of the MDM problem to be the ratio of the approximated number of deletion patterns by duplicating each bit $\digamma$ times in $\mathbf{y}$ and the maximal number of deletion patterns $\overline{\#}\left({\mathbf{y}}\right)$:
	\begin{align}
	\label{eq:4.4}
\gamma\left(\li,\digamma,\mathbf{y}\right):=\frac{\overline{\#}_{\mathrm{dup}}\left({\mathbf{y}}\right) }{\overline{\#}\left({\mathbf{y}}\right)}.
	\end{align}
where $\overline{\#}_{\mathrm{dup}}\left({\mathbf{y}}\right)$ is given by
	\begin{align}
	\label{eq:4.20}
	\overline{\#}_{\mathrm{dup}}\left({\mathbf{y}}\right):= \prod_{\ell=1}^{\lo}{\ell \digamma\choose \ell}^{R^{(\ell)}_{\mathbf{y}}}.
	\end{align}
\end{definition}

Note that $\overline{\#}_{\mathrm{dup}}\left({\mathbf{y}}\right)$ in (\ref{eq:4.20}) equals to the number of deletion patterns of the length-$\lo$ binary sequence $\mathbf{y}$ in the length-$\li$ sequence $\mathbf{x}_{\mathrm{dup}}\left(\mathbf{y}\right)$ by setting
\begin{align*}
\mathbf{x}_{\mathrm{dup}}\left(\mathbf{y}\right) = \underbrace{y_1,y_1,\ldots,y_1}_{\digamma \text{ many}},\ldots,\underbrace{y_\lo,y_\lo,\ldots,y_\lo}_{\digamma \text{ many}}.
\end{align*}
	
\subsubsection{Numerical Results}
We compute the duplication ratios with different block-lengths $\li$ and $\lo$. We exemplify part of the results in Table~\ref{table:1} and Table~\ref{table:2}. Furthermore, setting $\li=18$, we plot the following quantities:
\begin{align*}
\overline{C}_{\li}(\delep):=&\frac{1}{\li}\log\bigg(\sum_{\mathbf{y}\in\{0,1\}^{\lceil\li(1-\delep)\rceil}}\max_{\mathbf{x}\in\{0,1\}^{\li}}\#\left({\mathbf{x}},{\mathbf{y}}\right)\bigg),\\
\widetilde{C}_{\li}(\delep):=&\frac{1}{\li}\log\bigg(\sum_{\mathbf{y}\in\{0,1\}^{\lceil\li(1-\delep)\rceil}}\max_{\mathbf{x}\in\{0,1\}^{\li}}\#_{\mathrm{dup}}\left({\mathbf{x}},{\mathbf{y}}\right)\bigg).
\end{align*}

The output block-length $\lo$ in above is set to be an integer between $1$ and $\li=18$. In order to approximate the values of $\overline{C}_{\li}(\delep)$ and $\widetilde{C}_{\li}(\delep)$ when $\lo$ is not divided by $\li$, we consider the following three different approaches:

\textit{Approach 1 (Assign-to-the-last Approximation):}

First, duplicate $\digamma$ times each bit in $\mathbf{y}$ where $\digamma$ is the largest integer satisfying $\lo\digamma\leq \li$; then for the remaining $\li-\lo\digamma$ bits, assign them proportionally to the last several runs in $\mathbf{y}$. For instance, suppose $\lo=6$, $\li=15$ and $\mathbf{y}=010001$. An approximation can be obtained by first constructing a length-$12$ sequence by duplicating the bits in $\mathbf{y}$; then assigning $1$ bit to the last run and $2$ bits to the second last run.

\textit{Approach 2 (Assign-by-the-length Approximation):}

The first step is the same as Approach 1. For the remaining bits, longer runs get more bits, \textit{i.e.,} assign them to the longest run (of length $\ell$) until the length exceeds $\ell\li/\lfloor\lo\rfloor$. For example, suppose $\lo=6$, $\li=15$ and $\mathbf{y}=010001$. Then the formed new length-$\li$ sequence is $001100000000011$.

\textit{Approach 3 (Gamma Function Approximation):}
	
An alternative approximation is to substitute the binomial coefficients in Eq. (\ref{eq:4.20}) by Gamma functions, thus ensuring $\digamma=\li/\lo$ taking non-integers.
	
The three approximations of $\widetilde{C}_{\li}(\delep)$ and the ML upper bound $\overline{C}_{\li}(\delep)$ are depicted in Figure~\ref{fig:2}, together with the best known numerical upper bounds reported in~\cite{rahmati2015upper}.

\begin{figure}[H]
	\centering
	\includegraphics[scale=0.45]{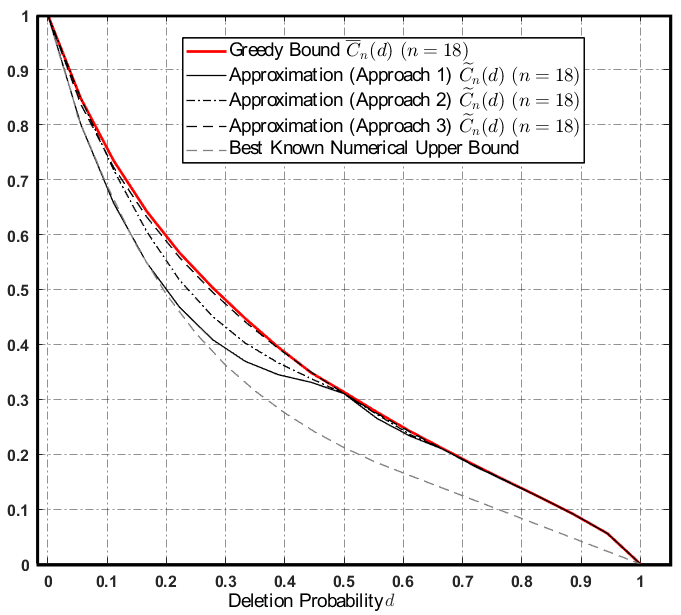}
	\caption{The ML upper bound (solid, red) $\overline{C}_{\li}(\delep)$ for BDC with block-length $\li=18$, together approximations of $\widetilde{C}_{\li}(\delep)$ obtained via the three approaches. The lower curve (dashed gray) corresponds to the best known numerical upper bounds provided in~\cite{rahmati2015upper}.}
	\label{fig:2}
	\medskip
	\hrule
\end{figure}

Some observations are summarized as hypotheses in sequel.

\begin{figure*}
	\begin{align}
	\label{eq:4.18}
	{\ell \digamma\choose \ell}= \frac{\left(\ell\digamma\right) !}{\ell !\left(\ell\digamma-\ell\right)! }
	&\leq \frac{\exp\left(-\ell\digamma+1\right) \left(\ell\digamma\right)^{\ell\digamma+1/2} }{\left(\sqrt{2\pi}\exp(-\ell)\ell^{\ell+1/2}\right)\cdot\left(\sqrt{2\pi}\exp(-\left(\ell\digamma-\ell\right))\left(\ell\digamma-\ell\right)^{\left(\ell\digamma-\ell\right)+1/2}\right)}
	\end{align}
	\hrule
\end{figure*}
\begin{hypothesis}
	\label{hypo:1}
	For any block-lengths $\li,\lo\in\mathbbm{N}^{+}$ and $\digamma={\li}/{\lo}\in\mathbbm{N}^{+}$,
	\begin{align}
	\nonumber
	\min_{\mathbf{y}\in\{0,1\}^{{\li}/{\digamma}}}\gamma\left(\li,\digamma,\mathbf{y}\right) = \frac{\digamma^{{\li}/{\digamma}}}{\overline{\#}\left({\mathbf{y}_{\mathrm{flip}}}\right)}
	\end{align}
	where ${\mathbf{y}_{\mathrm{flip}}}$ denotes a length-$\lo$ binary sequence with flipping bits, \textit{i.e.,} $R^{(1)}_{\mathbf{y}} = \lo$ and $R^{(\ell)}_{\mathbf{y}} = 0$ for all $\ell>1$.
\end{hypothesis}
Moreover, the approximations are tight, such that
\begin{hypothesis}
	\label{hypo:2}
	For any block-lengths $\li,\lo\in\mathbbm{N}^{+}$ and $\digamma={\li}/{\lo}\in\mathbbm{N}^{+}$,
	\begin{align}
	\nonumber
	\lim_{\li\rightarrow\infty}\frac{1}{\li}\log \gamma\left(\li,\digamma\right) = 0 .
	\end{align}
\end{hypothesis}

Furthermore, based on Hypothesis~\ref{hypo:1}, Hypothesis~\ref{hypo:2} is true when $\delep\rightarrow 1$.
Denote by $\gamma\left(\li,\digamma\right):=\min_{\mathbf{y}\in\{0,1\}^{{\li}/{\digamma}}}\gamma\left(\li,\digamma,\mathbf{y}\right)$ for notational convenience. We conclude the following asymptotic behavior of $\gamma\left(\li,\digamma,\mathbf{y}\right)$:
\begin{lemma}
	\begin{align*}
	\lim_{{\lo}/{\li}\rightarrow 1}\lim_{\li\rightarrow\infty}\frac{1}{\li}\log \gamma\left(\li,\digamma\right)=0.
	\end{align*}
\end{lemma}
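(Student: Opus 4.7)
The plan is to use Hypothesis~\ref{hypo:1} to reduce $\gamma(\li,\digamma)$ to a closed-form expression, and then sandwich $\frac{1}{\li}\log\gamma(\li,\digamma)$ between two quantities that both vanish as $\lo/\li\to 1$ (equivalently $\digamma\to 1$). By Hypothesis~\ref{hypo:1} together with the definition $\gamma(\li,\digamma):=\min_{\mathbf{y}}\gamma(\li,\digamma,\mathbf{y})$, one obtains
\begin{align*}
\frac{1}{\li}\log \gamma(\li,\digamma) = \frac{\log \digamma}{\digamma} - \frac{1}{\li}\log \overline{\#}(\mathbf{y}_{\mathrm{flip}}),
\end{align*}
so the proof reduces to sandwiching $\frac{1}{\li}\log \overline{\#}(\mathbf{y}_{\mathrm{flip}})$.

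The easy direction is the upper bound $\frac{1}{\li}\log\gamma(\li,\digamma)\leq 0$: since $\overline{\#}(\mathbf{y})$ is the maximum of $\#(\mathbf{x},\mathbf{y})$ over $\mathbf{x}\in\{0,1\}^\li$, it dominates the specific duplication choice $\overline{\#}_{\mathrm{dup}}(\mathbf{y})$ from~(\ref{eq:4.20}), forcing $\gamma(\li,\digamma,\mathbf{y})\leq 1$ for every $\mathbf{y}$.

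For the matching lower bound I would use the combinatorial inequality $\overline{\#}(\mathbf{y}_{\mathrm{flip}}) \leq \binom{\li}{\lo}=\binom{\li}{\li/\digamma}$, valid because Definition~\ref{def:2} already caps the counting function by $\binom{\li}{\lo}$. Combining with the display above and invoking the standard asymptotic $\frac{1}{\li}\log\binom{\li}{\li/\digamma} = h(1/\digamma)+o(1)$, one obtains
\begin{align*}
\liminf_{\li \to \infty}\frac{1}{\li}\log \gamma(\li,\digamma) \geq \frac{\log \digamma}{\digamma} - h(1/\digamma).
\end{align*}

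Letting $\digamma\to 1$, both $\log\digamma/\digamma$ and $h(1/\digamma)$ tend to $0$ (using $h(1)=0$), so the double limit is squeezed to $0$ as claimed. The one subtlety I anticipate concerns the interpretation of $\digamma=\li/\lo$ near $\lo/\li=1$: under the strict integer convention of Hypothesis~\ref{hypo:1}, $\lo/\li\to 1$ forces $\digamma=1$, where $\mathbf{y}_{\mathrm{flip}}=\mathbf{x}$ is the unique preimage and the claim is immediate; under the Gamma-function extension (Approach~3) that allows non-integer $\digamma$, the same squeeze carries through as $\digamma\downarrow 1$. No deeper obstacle remains once Hypothesis~\ref{hypo:1} is in hand -- the hard work has been exported there.
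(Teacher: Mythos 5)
Your proposal is correct and follows essentially the same route as the paper: invoke Hypothesis~\ref{hypo:1} to write $\gamma(\li,\digamma)=\digamma^{\li/\digamma}/\overline{\#}(\mathbf{y}_{\mathrm{flip}})$, lower-bound this by $\digamma^{\li/\digamma}/\binom{\li}{\lo}$ using the cap from Definition~\ref{def:2}, apply Stirling to get the $h(\lo/\li)$ asymptotic, and squeeze against the trivial upper bound $\gamma\leq 1$ as $\lo/\li\to 1$. Your explicit remark on the integer-vs-Gamma-function interpretation of $\digamma$ near $1$ is a point the paper glosses over, but it does not change the argument.
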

\begin{proof}
	Noe that $\gamma\left(\li,\digamma\right) = \frac{\digamma^{{\li}/{\digamma}}}{\overline{\#}\left({\mathbf{y}_{\mathrm{flip}}}\right)}\geq\frac{\digamma^{{\li}/{\digamma}}}{{\li\choose\lo}}$. Using Stirling's approximation (see (\ref{eq:4.18})), we get
	\begin{align*}
	{\li\choose\lo}\leq \frac{e}{2\pi}\cdot \frac{2^{\li h(\frac{\lo}{\li})}}{\sqrt{\left(1-\frac{\lo}{\li}\right)\lo}}.
	\end{align*}
	
	Since $\digamma=\li/\lo$, 
	\begin{align*}
	\gamma\left(\li,\digamma\right)\geq \frac{2\pi}{e}\cdot\left(\frac{\li}{\lo}\right)^{\lo}\frac{\sqrt{\left(1-\frac{\lo}{\li}\right)\lo}}{2^{\li h(\frac{\lo}{\li})}}.
	\end{align*}
	
	Thus, taking logarithm and letting $\li\rightarrow\infty$,
	\begin{align*}
	\lim_{\li\rightarrow\infty}\frac{1}{\li}\log \gamma\left(\li,\digamma\right)\geq& \lim_{\li\rightarrow\infty}\frac{1}{\li}\left(\lo\log\frac{\li}{\lo}-\li h\left(\frac{\lo}{\li}\right)\right)\\
	=&\lim_{\li\rightarrow\infty}\left(1-\frac{\lo}{\li}\right)\log\left(\frac{1}{1-{\lo}/{\li}}\right),
	\end{align*}
	which goes to $0$ as ${\lo}/{\li}\rightarrow 1$.
\end{proof}

The remaining part of this work is based on Hypothesis~\ref{hypo:2}.

\subsection{Explicit Approximation of $\overline{C}(\delep)$}
\label{sec:4.5}

Based on Hypothesis~\ref{hypo:2}, we can further bound the capacity $C(\delep)$ of the BDC when the deletion probability $\delep\geq 1/2$.

\begin{lemma}
Suppose Hypothesis~\ref{hypo:2} is true. For any block-lengths $\li,\lo\in\mathbbm{N}^{+}$ and $\digamma={\li}/{\lo}\in\mathbbm{N}^{+}$, the following bound on $C(\delep)$ holds:
\begin{align}
\label{eq:4.17}
C(\delep) 
\leq\liminf_{\li\rightarrow\infty}\frac{1}{\li}\log\sum_{\mathbf{y}\in\{0,1\}^{\lo}}{\prod_{\ell=1}^{\lo}{\ell \digamma\choose \ell}^{R_{\ell}\left({\mathbf{y}}\right)}}-h\left(\delep\right)
\end{align}
where $\lo=\lceil\li(1-\delep)\rceil$.
\end{lemma}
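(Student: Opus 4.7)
The plan is to chain the ML upper bound from Theorem~\ref{thm:1} with the pointwise duplication approximation, using Hypothesis~\ref{hypo:2} to absorb the multiplicative slack. First I would recall that Theorem~\ref{thm:1} already gives
\begin{align*}
C(\delep) \leq \liminf_{\li\to\infty}\overline{C}_\li(\delep) - h(\delep),\qquad \overline{C}_\li(\delep)=\frac{1}{\li}\log\sum_{\mathbf{y}\in\{0,1\}^{\lo}}\overline{\#}(\mathbf{y}),
\end{align*}
with $\lo=\lceil\li(1-\delep)\rceil$. So it suffices to upper bound $\sum_{\mathbf{y}}\overline{\#}(\mathbf{y})$ by the corresponding duplication sum, on a $1/\li$-logarithmic scale.

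Next I would invoke the definition of the duplication ratio (\ref{eq:4.4}): for every $\mathbf{y}\in\{0,1\}^{\lo}$,
\begin{align*}
\overline{\#}(\mathbf{y})=\frac{\overline{\#}_{\mathrm{dup}}(\mathbf{y})}{\gamma(\li,\digamma,\mathbf{y})}\leq \frac{\overline{\#}_{\mathrm{dup}}(\mathbf{y})}{\gamma(\li,\digamma)},
\end{align*}
where $\gamma(\li,\digamma)=\min_{\mathbf{y}}\gamma(\li,\digamma,\mathbf{y})$. Substituting the explicit form (\ref{eq:4.20}) for $\overline{\#}_{\mathrm{dup}}(\mathbf{y})$ and summing over $\mathbf{y}\in\{0,1\}^{\lo}$ yields
\begin{align*}
\sum_{\mathbf{y}\in\{0,1\}^{\lo}}\overline{\#}(\mathbf{y})\leq \frac{1}{\gamma(\li,\digamma)}\sum_{\mathbf{y}\in\{0,1\}^{\lo}}\prod_{\ell=1}^{\lo}\binom{\ell\digamma}{\ell}^{R_\ell(\mathbf{y})}.
\end{align*}
Taking $\frac{1}{\li}\log(\cdot)$ of both sides and passing to the $\liminf$,
\begin{align*}
\liminf_{\li\to\infty}\overline{C}_\li(\delep)\leq \liminf_{\li\to\infty}\frac{1}{\li}\log\sum_{\mathbf{y}\in\{0,1\}^{\lo}}\prod_{\ell=1}^{\lo}\binom{\ell\digamma}{\ell}^{R_\ell(\mathbf{y})}-\liminf_{\li\to\infty}\frac{1}{\li}\log\gamma(\li,\digamma).
\end{align*}
By Hypothesis~\ref{hypo:2} the last term is zero, and combining with the $-h(\delep)$ coming from Theorem~\ref{thm:1} gives the stated bound in (\ref{eq:4.17}).

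The only real subtlety, and where I would be careful, is the compatibility of the running index $\li$ with the hypothesis $\digamma=\li/\lo\in\mathbbm{N}^{+}$: for generic $\delep$ the length $\lo=\lceil\li(1-\delep)\rceil$ does not satisfy $\lo\mid\li$ for every $\li$, so the argument above is applied along the subsequence of $\li$ for which divisibility holds. Since the bound we are chaining from is itself a $\liminf_\li$, restricting to a subsequence can only enlarge the limit, which is the inequality direction we want. Everything else is routine algebra; no new concentration or optimization argument is needed beyond what Hypothesis~\ref{hypo:2} already packages.
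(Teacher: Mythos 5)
Your proposal is correct and follows essentially the same route as the paper: invoke Theorem~\ref{thm:1}, bound $\overline{\#}(\mathbf{y})\leq \overline{\#}_{\mathrm{dup}}(\mathbf{y})/\gamma(\li,\digamma)$ via the definition of the duplication ratio, sum, take $\tfrac{1}{\li}\log$, and let Hypothesis~\ref{hypo:2} kill the $\gamma$ term. Your remark on restricting to the subsequence of $\li$ with $\lo\mid\li$ is a small extra care the paper glosses over (it simply writes ``suppose $\lo$ divides $\li$''), but it does not change the argument.
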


\begin{proof}
We start with repeating Theorem~\ref{thm:1}:
\begin{align} 
\label{eq:4.1}
C(\delep)\leq\overline{C}(\delep)
=\frac{1}{\li}\log\Big(\sum_{\mathbf{y}\in\{0,1\}^{\lo}}\overline{\#}\left({\mathbf{y}}\right)\Big)-h\left(\delep\right)
\end{align}
provided $\lo=\lceil\li(1-\delep)\rceil$.

Suppose $\lo$ divides $\li$. Considering the definition of $\gamma\left(\li,\digamma\right)$,  the logarithmic term in above can be bounded as
\begin{align}
\nonumber
&\frac{1}{\li}\log\sum_{\mathbf{y}\in\{0,1\}^{\lo}}\overline{\#}\left({\mathbf{y}}\right)\\
\nonumber
\leq& \frac{1}{\li}\log\Bigg(\sum_{\mathbf{y}\in\{0,1\}^{\lo}}\frac{\prod_{\ell=1}^{\lo}{\ell \digamma\choose \ell}^{R^{(\ell)}_{\mathbf{y}}}}{\gamma\left(\li,\digamma\right)}\Bigg)\\
\label{eq:4.14}
=&\frac{1}{\li}\log\Bigg(\sum_{\mathbf{y}\in\{0,1\}^{\lo}}{\prod_{\ell=1}^{\lo}{\ell \digamma\choose \ell}^{R^{(\ell)}_{\mathbf{y}}}}\Bigg)-\frac{1}{\li}\log	\gamma\left(\li,\digamma\right)
\end{align}

Taking the limits $\li\rightarrow\infty$ and $\delep\rightarrow 1$, Hypothesis~\ref{hypo:2} implies (\ref{eq:4.17}).
\end{proof}

We will then show the following lemma holds:
\begin{lemma}
Suppose $\lo=\lceil\li(1-\delep)\rceil$. For all deletion probability $\delep\in [1/2,1)$,
\begin{align}
\nonumber
&\frac{1}{\li}\log\sum_{\mathbf{y}\in\{0,1\}^{\lo}}{\prod_{\ell=1}^{\lo}{\ell \digamma\choose \ell}^{R_{\ell}\left({\mathbf{y}}\right)}}\\
\label{eq:4.22}
=&h(\delep)+1-\delep+\frac{1}{\li}\log\Big(\mathbbm{E}\big[\exp\left({-\mu_{\delep}\left({\mathbf{Y}}\right)}\right)\big]\Big)
\end{align}
where 
\begin{align*}
\mu_{\delep}\left({\mathbf{y}}\right):=&\frac{1}{2}\sum_{\ell=1}^{\lo}R^{(\ell)}_{\mathbf{y}}\ln\left(\left(\frac{2\pi}{e}\right)^2\delep\ell\right),
\end{align*}
and the expectation is over a random length-$\lo$ sequence $\mathbf{Y}$ wherein each bit is a random variable with distribution $\mathrm{Bernoulli}\left({1}/{2}\right)$.
\end{lemma}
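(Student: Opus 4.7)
The strategy is to Stirling-expand each binomial coefficient $\binom{\ell\digamma}{\ell}$ using the same crude bounds employed in Eq.~(\ref{eq:4.18}), exploit the run-length identity $\sum_{\ell=1}^{\lo}\ell R^{(\ell)}_{\mathbf{y}} = \lo$ from Eq.~(\ref{eq:4.3}) to collapse the linear terms into the binary entropy, and then recast the sum over $\mathbf{y}$ as an expectation against a uniformly random $\mathbf{Y}\in\{0,1\}^{\lo}$. A key algebraic simplification throughout is the identity $\digamma-1=\delep\digamma$, which follows from $\digamma=1/(1-\delep)$ (since $\lo=\lceil\li(1-\delep)\rceil$).

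First, I would apply the crude Stirling bounds $\sqrt{2\pi}\,n^{n+1/2}e^{-n}\le n!\le n^{n+1/2}e^{-n+1}$ used in Eq.~(\ref{eq:4.18}) to the three factorials in $\binom{\ell\digamma}{\ell}$. A direct simplification using $\digamma-1=\delep\digamma$ and $\ell\digamma(1-\delep)=\ell$ gives
\begin{align*}
\ln\binom{\ell\digamma}{\ell} = \ell\ln\digamma-\ell\delep\digamma\ln\delep-\tfrac{1}{2}\ln\!\bigl((2\pi/e)^{2}\delep\ell\bigr),
\end{align*}
which explains the specific constant $(2\pi/e)^{2}$ appearing inside $\mu_{\delep}$. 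Weighting by $R^{(\ell)}_{\mathbf{y}}$ and summing over $\ell$, the run-length identity $\sum_{\ell}\ell R^{(\ell)}_{\mathbf{y}}=\lo$ together with $\lo\ln\digamma=-\li(1-\delep)\ln(1-\delep)$ and $\lo\delep\digamma=\li\delep$ collapses the first two terms into
\begin{align*}
-\li(1-\delep)\ln(1-\delep)-\li\delep\ln\delep \;=\; \li\, h_e(\delep) \;=\; \li\, h(\delep)\ln 2,
\end{align*}
while the third term is, by definition, $-\mu_{\delep}(\mathbf{y})$. Exponentiating then yields the pointwise identity
\begin{align*}
\prod_{\ell=1}^{\lo}\binom{\ell\digamma}{\ell}^{R^{(\ell)}_{\mathbf{y}}} \;=\; 2^{\li h(\delep)}\exp\!\bigl(-\mu_{\delep}(\mathbf{y})\bigr).
\end{align*}

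Next, summing over $\mathbf{y}\in\{0,1\}^{\lo}$ and writing $\sum_{\mathbf{y}}=2^{\lo}\,\mathbbm{E}$ for $\mathbf{Y}$ uniform produces
\begin{align*}
\sum_{\mathbf{y}\in\{0,1\}^{\lo}}\prod_{\ell=1}^{\lo}\binom{\ell\digamma}{\ell}^{R^{(\ell)}_{\mathbf{y}}} \;=\; 2^{\li h(\delep)+\lo}\,\mathbbm{E}\!\left[\exp\!\bigl(-\mu_{\delep}(\mathbf{Y})\bigr)\right].
\end{align*}
Taking $\frac{1}{\li}\log$ and using $\lo/\li=1-\delep$ rearranges into exactly Eq.~(\ref{eq:4.22}).

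The main obstacle is the first step: the crude Stirling bounds only furnish a one-sided inequality, so the constant $(2\pi/e)^{2}$ matches only when one uses the upper bound $n!\le n^{n+1/2}e^{-n+1}$ in the numerator and the lower bound $n!\ge\sqrt{2\pi}\,n^{n+1/2}e^{-n}$ in the denominator. Consequently the claimed equality must either be interpreted as an upper bound on the LHS (which is all that is needed for the upstream bound~(\ref{eq:4.17})) or else supplemented by a matching lower-bound argument; in either case one must verify that the Stirling remainder, when weighted by $R^{(\ell)}_{\mathbf{y}}$ and summed over $\ell$, contributes $o(\li)$ after division by $\li$ so that the asymptotic statement is unaffected. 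This is the only delicate point; everything else is algebraic rearrangement followed by recognizing a probabilistic expectation.
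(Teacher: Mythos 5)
Your proposal follows essentially the same route as the paper: Stirling-expand $\binom{\ell\digamma}{\ell}$, use $\digamma=1/(1-\delep)$ and the run-length identity $\sum_{\ell}\ell R^{(\ell)}_{\mathbf{y}}=\lo$ to collapse the leading terms into $\li h(\delep)$, isolate $-\mu_{\delep}(\mathbf{y})$ as the residual, and rewrite the sum over $\mathbf{y}$ as $2^{\lo}$ times an expectation under the uniform measure. Your closing caveat is also well taken --- the paper's own derivation likewise only produces a one-sided ($\leq$) chain from the Stirling bounds, so the stated equality is really an upper bound, exactly as you observe.
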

\begin{proof}
Applying Stirling's approximation (inequalities) to the binomial coefficients, (\ref{eq:4.17}) follows. Therefore
\begin{align}
\nonumber
{\ell \digamma\choose \ell}&\leq\frac{e\left(\ell\digamma\right)^{\ell\digamma+1/2}}{2\pi\ell^{\ell+1/2}\left(\ell\digamma-\ell\right)^{\left(\ell\digamma-\ell\right)+1/2}}\\
\label{eq:4.13}
&=\frac{e}{2\pi}\cdot\frac{\left(\ell\digamma\right)^{\ell\digamma}}{\ell^{\ell}\left(\ell\digamma-\ell\right)^{\left(\ell\digamma-\ell\right)}}\cdot \sqrt{\frac{\digamma}{\ell\left(\digamma-1\right)}}.
\end{align}

Since $\digamma=\frac{1}{1-\delep}$, we have $\frac{\digamma}{\digamma-1}=\frac{1}{\delep}$. Thus,
\begin{align}
\label{eq:4.11}
\frac{\left(\ell\digamma\right)^{\ell\digamma}}{\ell^{\ell}\left(\ell\digamma-\ell\right)^{\left(\ell\digamma-\ell\right)}}&=2^{\ell\digamma h\left({\left(\digamma-1\right)}/{\digamma}\right)} = 2^{\ell\digamma h(\delep)},\\
\label{eq:4.12}
\sqrt{\frac{\digamma}{\ell\left(\digamma-1\right)}}& = \sqrt{\frac{1}{\delep\ell}}.
\end{align}

Putting (\ref{eq:4.11}) and (\ref{eq:4.12}) into (\ref{eq:4.13}), 
\begin{align*}
{\ell \digamma\choose \ell}\leq \frac{e}{2\pi}\cdot \frac{2^{\ell\digamma h(\delep)}}{\sqrt{\delep\ell}}.
\end{align*}

Therefore,
\begin{align*}
&\ln\left(\prod_{\ell=1}^{\lo}{\ell \digamma\choose \ell}^{R^{(\ell)}_{\mathbf{y}}}\right) \\
=& \sum_{\ell=1}^{\lo}R^{(\ell)}_{\mathbf{y}}\ln{\ell \digamma\choose \ell}\\
\leq&\sum_{\ell=1}^{\lo}\left({\ln 2}\right)\ell R^{(\ell)}_{\mathbf{y}} \digamma h(\delep)\\
&+\sum_{\ell=1}^{\lo}R^{(\ell)}_{\mathbf{y}}\ln\frac{e}{2\pi} -\frac{1}{2}\sum_{\ell=1}^{\lo}R^{(\ell)}_{\mathbf{y}}\ln\left(\delep\ell\right).
 \end{align*}

According to (\ref{eq:4.3}), $\sum_{\ell=1}^{\lo}\ell R^{(\ell)}_{\mathbf{y}} = \lo$, implying that
\begin{align*}
\sum_{\ell=1}^{\lo}\ell R^{(\ell)}_{\mathbf{y}} \digamma h(\delep) =\lo \digamma h(\delep) = \li h(\delep).
\end{align*}

Continuing from above,
\begin{align}
\nonumber
&\ln\left(\prod_{\ell=1}^{\lo}{\ell \digamma\choose \ell}^{R^{(\ell)}_{\mathbf{y}}}\right) \\
\leq&\left({\ln 2}\right)\li h(\delep)+ \sum_{\ell=1}^{\lo}R^{(\ell)}_{\mathbf{y}}\ln\frac{e}{2\pi} -\frac{1}{2}\sum_{\ell=1}^{\lo}R^{(\ell)}_{\mathbf{y}}\ln\left(\delep\ell\right)\\
\label{eq:4.15}
=&\left({\ln 2}\right)\li h(\delep) -\frac{1}{2}\sum_{\ell=1}^{\lo}R^{(\ell)}_{\mathbf{y}}\ln\left(\left(\frac{2\pi}{e}\right)^2\delep\ell\right).
\end{align}

Recall that
\begin{align*}
\mu_{\delep}\left({\mathbf{y}}\right):=\frac{1}{2}\sum_{\ell=1}^{\lo}R^{(\ell)}_{\mathbf{y}}\ln\left(\left(\frac{2\pi}{e}\right)^2\delep\ell\right).
\end{align*}
Thus, summing over all $\mathbf{y}\in\{0,1\}^{\lo}$, (\ref{eq:4.15}) yields that
\begin{align}
\nonumber
&\frac{1}{\li}\log\sum_{\mathbf{y}\in\{0,1\}^{\lo}}{\prod_{\ell=1}^{\lo}{\ell \digamma\choose \ell}^{R_{\ell}\left({\mathbf{y}}\right)}}\\
\leq&\frac{1}{\li}\log \sum_{\mathbf{y}\in\{0,1\}^{\lo}}\exp\big(\left({\ln 2}\right){\li h(\delep)-\mu_{\delep}\left({\mathbf{y}}\right)}\big) \\
\nonumber
=&h(\delep)+\frac{1}{\li}\log\sum_{\mathbf{y}\in\{0,1\}^{\lo}}\exp\big({-\mu_{\delep}\left({\mathbf{y}}\right)}\big).
\end{align}

The summation $\sum_{\mathbf{y}\in\{0,1\}^{\lo}}\exp\left({-\mu_{\delep}\left({\mathbf{y}}\right)}\right)$ can be regarded as $2^\lo$ times the expectation of $\exp\left({-\mu_{\delep}\left({\mathbf{Y}}\right)}\right)$ given that each bit in $\mathbf{Y}$ is selected $\mathrm{Bernoulli}\left({1}/{2}\right)$. Therefore,
\begin{align}
\nonumber
&\frac{1}{\li}\log\sum_{\mathbf{y}\in\{0,1\}^{\lo}}{\prod_{\ell=1}^{\lo}{\ell \digamma\choose \ell}^{R_{\ell}\left({\mathbf{y}}\right)}}
\\
\nonumber
\leq&h(\delep)+\frac{1}{\li}\log\Big(2^\lo\mathbbm{E}\big[\exp\left({-\mu_{\delep}\left({\mathbf{Y}}\right)}\right)\big]\Big)\\
\label{eq:4.21}
=&h(\delep)+1-\delep+\frac{1}{\li}\log\Big(\mathbbm{E}\big[\exp\left({-\mu_{\delep}\left({\mathbf{Y}}\right)}\right)\big]\Big).
\end{align}
\end{proof}


Taking the expectation outside, we derive the following approximation of the last term in (\ref{eq:4.21}):
\begin{align}
\label{eq:4.24}
\frac{1}{\li}\mathbbm{E}\Big[\log\big(\exp\left({-\mu_{\delep}\left({\mathbf{Y}}\right)}\right)\big)\Big]=-\frac{1}{\li}\mathbbm{E}\big[{\left(\log e\right)\mu_{\delep}\left({\mathbf{Y}}\right)}\big].
\end{align}
For a $\mathrm{Bernoulli}\left({1}/{2}\right)$ process, the distribution of the number of $\ell$-runs $R^{(1)}_{\mathbf{y}}$,$\ldots,R^{(\lo)}_{\mathbf{y}}$ is proportional to the ``run-length distribution'' defined in~\cite{kanoria2013optimal}, which is $\{1/2^{\ell}\}_{\ell=1}^{\infty}$. Hence, considering (\ref{eq:4.3}), the expectation of the number of runs is ${\lo}/{2^{\ell+1}}$, \textit{i.e.,}
\begin{align}
\nonumber
\mathbbm{E}\big[R^{(\ell)}_{\mathbf{Y}}\big] = \frac{\lo}{2^{\ell+1}}.
\end{align}
It follows that
\begin{align}
\nonumber
\mathbbm{E}\big[{\mu_{\delep}\left({\mathbf{Y}}\right)}\big]=&\mathbbm{E}\Bigg[\frac{1}{2}\sum_{\ell=1}^{\lo}R^{(\ell)}_{\mathbf{y}}\ln\left(\left(\frac{2\pi}{e}\right)^2\delep\ell\right)\Bigg]\\
\nonumber
=&\frac{1}{2}\sum_{\ell=1}^{\lo}\mathbbm{E}\big[R^{(\ell)}_{\mathbf{Y}}\big]\ln\left(\left(\frac{2\pi}{e}\right)^2\delep\ell\right)\\
\label{eq:4.23}
=&\frac{\lo}{2}\sum_{\ell=1}^{\lo}\frac{1}{2^{\ell+1}}\ln\left(\left(\frac{2\pi}{e}\right)^2\delep\ell\right).
\end{align}

Combining (\ref{eq:4.17}), (\ref{eq:4.22}), the approximation (\ref{eq:4.23}) and the identity (\ref{eq:4.24}) above, the approximation $\widetilde{C}(\delep)$ defined below holds:
\begin{align}
\nonumber
\widetilde{C}(\delep):=&1-\delep-\lim_{\li\rightarrow\infty}\frac{1}{\li}\mathbbm{E}\big[{\left(\log e\right)\mu_{\delep}\left({\mathbf{Y}}\right)}\big]\\
\label{eq:4.25}
=&1-\delep-\frac{1}{2}\psi\left(1-\delep\right)
\end{align}
where
\begin{align*}
\psi:=&\sum_{\ell=1}^{\infty}\frac{1}{2^{\ell+1}}\log\left(\left(\frac{2\pi}{e}\right)^2\delep\ell\right)\\
=&\frac{1}{2}\log\delep+\sum_{\ell=1}^{\infty}\frac{\log\left(\left({2\pi}/{e}\right)\sqrt{\ell}\right)}{2^{\ell}} \lesssim\frac{1}{2}\log\delep+1.09179.
\end{align*}

A figure depicting the explicit approximation $\widetilde{C}(\delep)$ for $\delep\geq 1/2$ is provided below.

\begin{figure}[H]
	\centering
	\includegraphics[scale=0.45]{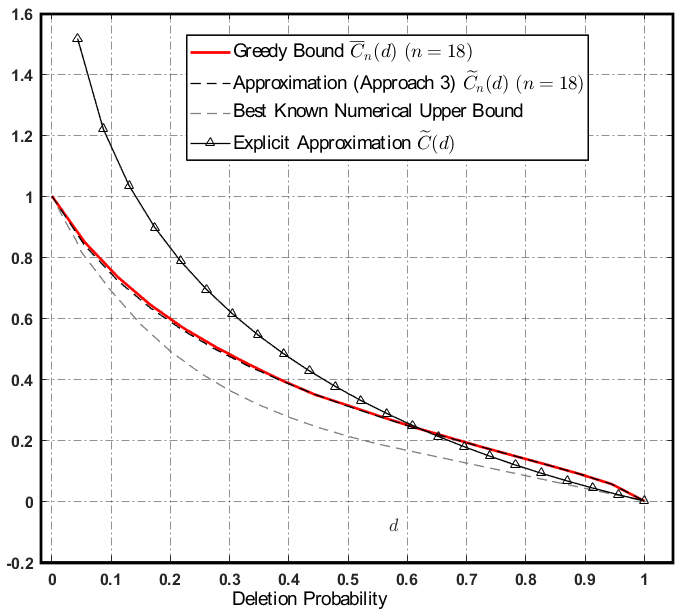}
	\caption{The ML upper bound (solid, red) $\overline{C}_{\li}(\delep)$ for BDC with block-length $\li=18$, the approximation of $\widetilde{C}_{\li}(\delep)$ obtained via the gamma function, and the \textit{explicit approximation} $\widetilde{C}(\delep)$ derived in (\ref{eq:4.25}). The lower curve (dashed gray) corresponds to the best known numerical upper bounds provided in~\cite{rahmati2015upper}.}
	\label{fig:3}
	\medskip
	\hrule
\end{figure}

\section{Conclusion}

We derive a general ML upper bound (See~Theorem~\ref{thm:0}) for \textit{information stable channels}. The corresponding bounds are shown to be tight for simple channels, \textit{e.g.,} the BEC and the BSC. Furthermore, we demonstrate the usage of the bound on the BDC, whose capacity remains unknown. The corresponding upper bound for the BDC derived from the general bound coincides with an intriguing combinatorial problem (defined as the MDM problem in Definition~\ref{def:4}). Approximations for the derived upper bound are provided via three different approaches. Furthermore, analyzing and approximating the limiting behavior of the derived upper bound gives an explicit bound reported in (\ref{eq:4.25}) (and shown in Figure~\ref{fig:3}), validating that the general bound is capable of providing nontrivial results for sophisticated channels with memory. The next step is to validate the upper bounds on a varriety types of channels and formalize a more general framework based on the main result (Theorem~\ref{thm:0}) stated in the paper.

\newpage
\addcontentsline{toc}{section}{Bibliography}
{\bibliographystyle{IEEEtran}
	\bibliography{ref}}

\end{document}